\documentclass[11pt, a4paper]{article}

\usepackage[margin=1in]{geometry}
\usepackage{amsmath, amssymb, amsthm}
\usepackage{enumitem}
\usepackage{booktabs}
\usepackage{array}
\usepackage{xcolor}
\usepackage{graphicx}
\usepackage{hyperref}
\usepackage[expansion=false]{microtype}
\usepackage{algorithm}
\usepackage{algpseudocode}
\usepackage{mathtools}

\hypersetup{colorlinks=true, linkcolor=blue!60!black, citecolor=blue!60!black, urlcolor=blue!60!black}
\usepackage[nameinlink,capitalise]{cleveref}

\newtheorem{proposition}{Proposition}

\newtheorem{corollary}{Corollary}

\newtheorem{assumption}{Assumption}

\newtheorem{desideratum}{Desideratum}
\usepackage{libertine}
\newcommand{\E}{\mathbb{E}}
\newcommand{\Var}{\mathrm{Var}}
\newcommand{\Cov}{\mathrm{Cov}}

\newcommand{\QSMF}{\textsc{QS-MF}}

\title{Quality-Sensitive Matrix Factorization for Community Notes:\\ Towards Sample Efficiency and Manipulation Resistance}
\author{
\begin{tabular}{ccc}
Mohak Goyal$^{\ast}$ &
Nishka Arora$^{\ast}$ &
Ashish Goel \\
\nolinkurl{goyalmohak2@gmail.com} &
\nolinkurl{nishkaa@stanford.edu} &
\nolinkurl{ashishg@stanford.edu}
\end{tabular}\\[0.75em]
Department of Management Science \& Engineering, Stanford University
}
\date{}

\begin{document}
\maketitle
\begingroup
\renewcommand{\thefootnote}{\fnsymbol{footnote}}
\footnotetext[1]{Equal contribution.}
\endgroup
\setcounter{footnote}{0}

\begin{abstract}
Community Notes is X's crowdsourced fact-checking program: contributors write
short notes that add context to potentially misleading posts, and other
contributors rate whether those notes are helpful. Its algorithm
uses a matrix factorization model to separate ideology from note
quality, so notes are surfaced only when they receive support across
ideological lines. After ideology is accounted for, however, the model gives
all raters equal influence on quality estimates. This slows consensus
formation and leaves the quality estimate vulnerable to noisy or strategic
raters. We propose Quality-Sensitive Matrix Factorization (\QSMF{}), which uses a per-rater quality-sensitivity
parameter \(\hat\rho_i\) estimated jointly with all other parameters. This connects
\QSMF{} to peer prediction: without external ground truth, it gives more
influence to raters whose ideology-adjusted ratings are more consistent with
the note-quality estimates learned from all the ratings.

We evaluate \QSMF{} on 45M ratings over 365K notes from the six months before
the 2024 U.S.\ presidential election. Split-half tests confirm that quality
sensitivity is a stable, empirically recoverable rater trait. In evaluation on high-traffic notes, \QSMF{} requires 26--40\%
fewer ratings to match the baseline's accuracy. In semi-synthetic coordinated
attacks on notes of opposing ideology, \QSMF{} substantially reduces displacement on the estimated quality estimates of targeted notes relative
to the baseline. In synthetic data with known ground truth, \(\hat\rho_i\)
separates good from bad raters with an AUC above 0.94, and achieves much lower error in recovering the true note quality estimates in the presence of bad raters.
These gains come from a single additional scalar parameter per rater, with no
external ground truth and no manual moderation.

\end{abstract}

\section{Introduction}\label{sec:intro}

Misinformation spreads faster than it can be corrected. On social media platforms, a false claim can reach millions of users within hours, while professional fact-checking organisations, constrained by limited staff, can assess only a small fraction of the content that circulates online \cite{allen2021scaling}. By the time a professional verdict is published, the damage is often done. This fundamental asymmetry between the speed of misinformation and the capacity of expert review has motivated a search for scalable alternatives.
Crowdsourced fact-checking offers a fundamentally different approach. Rather than relying on a small cadre of professionals, it enlists ordinary users, the same crowd that consumes and shares content, to evaluate its accuracy. A growing body of evidence supports this strategy: politically balanced groups of laypeople can identify misinformation with accuracy comparable to professional fact-checkers \cite{allen2021scaling,martel2023crowds}, and crowdsourced corrections influence user behaviour, increasing the likelihood that authors retract misleading posts \cite{renault2024collaboratively}.

Community Notes (CN), originally launched as Birdwatch on Twitter in 2021 \cite{wojcik2022,prollochs2022birdwatch}, is the most prominent implementation of crowdsourced fact-checking at scale. Contributors write short notes that add context to potentially misleading posts; other contributors then rate these notes for helpfulness. The system's main innovation is its consensus mechanism: rather than simple majority vote, CN uses a ``bridging" algorithm that requires agreement across ideological lines \cite{ovadya2023}. A note is deemed helpful only when it receives support from raters who typically disagree. This design makes it difficult for any single partisan group to control which notes are displayed.

The approach has been quite successful. Displayed notes reduce engagement with misinformation \cite{slaughter2025engagement} and are trusted more than platform-authored labels across partisan lines \cite{drolsbach2024trust}. The model has been adopted beyond X: YouTube \cite{youtube_cn}, TikTok \cite{tiktok_cn}, and Meta \cite{meta_cn} have all launched Community Notes-style programmes. The open-source algorithm and publicly available rating data \cite{communitynotes_data} make CN a uniquely transparent content moderation system: a rare setting where the full pipeline from raw ratings to displayed labels can be independently scrutinised.

Underlying CN is a challenge that is fundamental to all such systems: how to aggregate judgments from many agents when there is no objective ground truth against which to verify them. The classical wisdom-of-crowds result, dating to Condorcet's jury theorem \cite{condorcet1785}, establishes that aggregating independent judgments from individually competent agents yields a collective decision that improves with group size \cite{surowiecki2004}. But the theorem's assumptions (equal competence, independence, and a single dimension of truth) rarely hold in practice, especially in politically charged settings where agents differ systematically in both ability and motivation.

When ground truth is unavailable or too costly to obtain, the \emph{peer prediction} literature provides the theoretical foundation for extracting signal from subjective reports. Miller, Resnick, and Zeckhauser \cite{miller2005} showed that truthful reporting can be incentivised by scoring each agent's report against those of peers, exploiting the statistical structure of inter-rater agreement. 
In the crowdsourcing literature, this principle has been operationalised through models that jointly estimate worker reliability and item difficulty from observed labels alone. Dawid and Skene \cite{dawid1979} introduced per-worker confusion matrices estimated via an EM-based algorithm. 
Raykar et al.\ \cite{raykar2010} developed a Bayesian framework for learning from crowds of annotators with varying reliability. The central insight across this literature is that not all agents are equally informative, and the data itself reveals whose rating is more reliable.

CN builds on this tradition but introduces a critical complication: the items being rated (notes often about politically charged content) evoke strong ideological reactions that are confounded with quality judgments. Standard crowdsourcing reliability models cannot disentangle these two channels. A partisan rater whose ratings are predictable and low-variance appears ``reliable'' in any model that equates consistency with quality. The CN  algorithm addresses this partially by modelling ideology as a separate dimension, encouraging that notes require genuine cross-partisan support. But while it correctly handles the \emph{direction} of ideological disagreement, it treats all raters as equally informative about note quality once ideology is accounted for.

This worsens well-documented vulnerabilities. Consensus emerges too slowly: publication delays often exceed the critical window for viral misinformation \cite{Razuvayevskaya2025_CommunityNotesConsensus,renault2024collaboratively}. Low-effort and strategically motivated raters distort quality estimates \cite{truong2025community}. These problems share a common root: \emph{influence on note quality is fixed rather than responsive to demonstrated quality tracking}. We show that learning a per-rater quality-sensitivity parameter which governs how much each rater's opinion influences note quality estimates, addresses both problems simultaneously, yielding faster convergence and greater robustness to adversarial manipulation.

\subsection{The Baseline Model and its Limitations}

We briefly introduce the CN model here, and a full treatment appears in Section~\ref{sec:baseline-model}. Each observed rating $r_{ij}$ (the score that rater $i$ assigns to note $j$) is modelled as the sum of four components:
\begin{equation}\label{eq:baseline}
    r_{ij} = \mu + \alpha_i + \beta_j + \gamma_i \delta_j + \epsilon_{ij}.
\end{equation}
Here $\mu$ is a global intercept capturing the average rating level; $\alpha_i$ is a rater-specific bias (some raters are systematically generous, others harsh); $\beta_j$ is the \emph{note quality} we ultimately want to estimate; and $\gamma_i \delta_j$ is a bilinear term that captures ideological alignment between rater $i$ and note $j$ along a latent one-dimensional ideology axis. The noise term $\epsilon_{ij}$ accounts for everything else. Because the product $\gamma_i \delta_j$ is unchanged if both coordinates are multiplied by $-1$, the sign of this latent axis is arbitrary: positive and negative values should not be interpreted as named political camps. The key design insight is the $\gamma_i \delta_j$ term: it implements bridging by ensuring that a note receives a high quality estimate $\hat\beta_j$ only when raters with very different $\gamma_i$ values agree it is helpful. 

Consider how the model estimates note quality. Once all other parameters are fitted, we can strip away the components that have nothing to do with quality by defining the \emph{de-ideologised rating}
\begin{equation}\label{eq:deideologized-intro}
    d_{ij} = r_{ij} - \hat\mu - \hat\alpha_i - \hat\gamma_i \hat\delta_j,
\end{equation}
which removes the rater's baseline tendency and ideological component, leaving behind the quality signal plus noise. The estimated note quality is then simply the average of these de-ideologised ratings: $\hat\beta_j = \sum_i d_{ij} / (n_j + \lambda_\beta)$, where every rater receives equal weight. Here $\lambda_\beta$ is the regularization weight. The problem is not ideology itself; ideological raters who also attend to quality contribute useful $d_{ij}$. The problem is raters whose signal is \emph{entirely} non-quality. 

\subsection{How \QSMF{} Works}\label{sec:how-qsmf}

Our method, \textbf{Quality-Sensitive Matrix Factorization (\QSMF{})}, replaces the additive note-quality term $\beta_j$ in the baseline model with the bilinear term $\rho_i \beta_j$, where $\rho_i \geq 0$ is a per-rater \emph{quality sensitivity} parameter estimated jointly with all other parameters:
\begin{equation}\label{eq:qsmf-intro}
    r_{ij} = \mu + \alpha_i + \rho_i \beta_j + \gamma_i \delta_j + \epsilon_{ij}.
\end{equation}
The standard CN model is the special case $\rho_i = 1$ for all $i$. A diagnostic based on the Rao score \cite{rao1948large} rejects this restriction on real data, confirming that quality sensitivity varies meaningfully across raters.

The bilinear structure $\rho_i \beta_j$ gates each rater's influence on $\hat\beta_j$ through their estimated $\hat\rho_i$. Through the normal equations, rater $i$'s contribution to $\hat\beta_j$ scales as $\hat\rho_i$, and their contribution to the ideology estimate $\hat\delta_j$ scales as $\hat\gamma_i$. The model provides \emph{channel-specific weighting}: each rater's information flows to the channel where it is informative.

Intuitively, $\rho_i$ answers the question: \emph{how much better do we understand this rater's ratings once we know note quality?} Every rater's ratings are partially explained by their baseline tendency ($\mu + \alpha_i$), their ideological alignment ($\gamma_i \delta_j$), and noise. What remains after removing baseline and ideology (the de-ideologized rating $d_{ij}$) is either informative about note quality or it is pure noise. The parameter $\rho_i$ captures exactly this distinction: the fraction of de-ideologized variance that the quality signal $\rho_i \beta_j$ explains beyond what baseline, ideology, and noise already account for.

\subsection{Contributions}

A growing body of work has documented the promise and fragility of CN: publication delays that exceed the window for viral content \cite{Razuvayevskaya2025_CommunityNotesConsensus,renault2024collaboratively}, estimate instability under rating surges \cite{chuai2026consensus}, outsized influence of hyperactive minorities \cite{nudo2026hyperactive}, vulnerability to bias and strategic manipulation \cite{truong2025community}, and systematic heterogeneity in rater quality \cite{chuai2025request, wack2026laziness}. However, to our knowledge, we provide the first algorithmic solution that \emph{simultaneously} improves sample efficiency and robustness to adversarial attacks, through a single, principled extension to the model. 

\begin{enumerate}[leftmargin=*]
    \item \textbf{Evaluation framework.} We formalise the desiderata of a crowdsourced fact-checking system operating without ground truth (sample efficiency, robustness, rater identification, and discriminative power; Section~\ref{sec:desiderata}), and use them as standing criteria throughout the paper.

    \item \textbf{\QSMF{}.} We introduce Quality-Sensitive Matrix Factorization, a single-parameter extension of the CN model that makes each rater's influence on note quality proportional to their demonstrated quality tracking. The per-rater parameter $\rho_i$ is estimated natively inside the matrix factorization.

    \item \textbf{Theoretical analysis.} We provide theoretical intuition for why \QSMF{} helps: the heterogeneous weighting is more sample-efficient for $\hat\beta_j$ estimation (Proposition~\ref{prop:efficiency}) under our assumed DGP, and the reputation mechanism structurally attenuates naive adversarial attacks (Section~\ref{sec:rho-update}).

    \item \textbf{Empirical evaluation.} 
 We evaluate on all the ratings on X Corp. from the six months prior to  the 2024 U.S.\ presidential election, complemented by synthetic experiments with known ground truth.  The results are replicated across repeated random splits, random seeds, rolling windows, and bootstrap resamples. \QSMF{} improves over the baseline on every desideratum (Sections~\ref{sec:oos}--\ref{sec:attacks}).
 
 We validate that \QSMF{} better describes the rating process than the baseline, using both out-of-sample prediction (Section~\ref{sec:oos}) and a score-based diagnostic that rejects the restriction $\rho_i = 1~\forall~i$ (Section~\ref{sec:dgp-validation}).
\end{enumerate}

\subsection{Evaluation Without Ground Truth}\label{sec:desiderata}

There is no ground truth for note quality $\beta_j$, and it is difficult to assign truth labels without receiving criticism from one or both political camps. Any evaluation must rely on internal consistency and operational performance. We define four properties that any rating aggregation algorithm should satisfy. 

\begin{desideratum}[Sample efficiency]\label{des:sample}
    Note quality estimates should converge quickly as ratings accumulate. Suppose each rating for note $j$ is drawn i.i.d.\ from a population of raters. Holding the ratings for all other notes constant, let $\hat\beta_j^{(k)}$ denote the estimate for the quality of note $j$ from $k$ ratings. Let $\hat\beta_j = \lim_{k \to \infty} \hat\beta_j^{(k)}$ denote the quality estimate in the large-sample limit. We require that $\E[(\hat\beta_j^{(k)} - \hat\beta_j)^2]$ decreases rapidly with $k$. 
\end{desideratum}

\begin{desideratum}[Robustness to manipulation]\label{des:robust}
    If a fraction $f$ of raters submit strategic ratings for a set of targetted notes, the quality estimate $\hat\beta_j$ for targeted notes should remain close to the estimate computed from honest raters alone. We measure robustness by the mean squared displacement $(\hat\beta_j^{\text{attacked}} - \hat\beta_j^{\text{clean}})^2$ on targeted notes. 
\end{desideratum}

\begin{desideratum}[Rater identification]\label{des:identification}
    If raters differ in a trait that affects the informativeness of their ratings for note quality, the algorithm should recover that trait from observed ratings alone. 
\end{desideratum}

\begin{desideratum}[Discriminative power]\label{des:discriminative}
    Suppose each note $j$ has a true, unobserved quality $\beta_j^*$. The algorithm's estimate should satisfy $\E[(\hat\beta_j - \beta_j^*)^2] \to 0$ as the number of ratings grows. 
\end{desideratum}

 Discriminative power is a necessary complement to the others: an algorithm that sets $\hat\beta_j = c$ for all $j$ trivially achieves perfect sample efficiency and perfect robustness, but is useless because it learns nothing about the true quality. These four criteria are jointly necessary and individually insufficient. They structure the entire experimental evaluation that follows.

\subsection{Relationship to Prior Work}
The problem of aggregating judgments from annotators of varying reliability has a long history. Dawid and Skene \cite{dawid1979} introduced the EM-based confusion-matrix model for discrete labels, and Raykar et al.\ \cite{raykar2010} extended this framework to continuous annotations with per-annotator variance, establishing conditions for unbiased estimation. GLAD \cite{whitehill2009} jointly models worker ability and task difficulty, and subsequent work has generalised along several axes: multidimensional annotator models \cite{welinder2010}, mixture models for annotator subpopulations \cite{guan2018}, and end-to-end integration with deep classifiers \cite{rodrigues2018}. In the matrix factorization literature, Salakhutdinov and Mnih \cite{salakhutdinov2008} introduced the probabilistic MF (PMF) framework on which the CN model is built, and Koren et al.\ \cite{koren2009matrix} established the standard techniques for recommender systems. 

Structurally, the quality channel of \QSMF{} resembles item response theory (IRT) \cite{baker2004,embretson2000}, where responses depend on the interaction of (possibly multi-dimensional) user and item features. We posit that quality-tracking is a meaningful and stable user feature, and should be included in the model. The ideology factor $\gamma_i\delta_j$ acts as a one-dimensional spatial voting model \cite{clinton2004}. The ideological factor is the key distinction from both standard IRT and the standard crowdsourcing setup. In models like Dawid--Skene, GLAD, or Raykar et al., a reliable annotator is one who agrees with the consensus. In CN, a rater can agree with the consensus for ideological reasons that carry no quality information. Only by jointly modelling the ideology channel can we separate ideology-driven consistency from genuine quality tracking (see Section~\ref{sec:how-qsmf}).

Peer prediction mechanisms \cite{miller2005,prelec2004,dasgupta2013,shnayder2016} reward reports that are informative about a latent truth without access to ground-truth labels. The $\hat{\rho}_i$ update in \QSMF{} is analogous to a peer prediction scoring rule: it rewards alignment with the emerging quality consensus after partialling out ideology, connecting MF optimisation to mechanism-design intuitions. 


\subsection{Roadmap}
Section~\ref{sec:baseline-model} reviews the standard CN model, discusses the role of regularisation, and explains why two natural enhancements fail (Section~\ref{sec:simpler-fixes}). Section~\ref{sec:stage2} introduces the \QSMF{} model: its data generating process, objective function, identifiability properties, and estimation procedure. Section~\ref{sec:theory} develops the theoretical properties of \QSMF{}, including efficiency gains from heterogeneous weighting (Section~\ref{sec:efficiency}), the interpretation of $\rho_i$ as quality-channel fit (Section~\ref{sec:rho-update}), and the structural attenuation of naive attacks (Section~\ref{sec:rho-update}). The experimental evaluation begins with data and training details (Sections~\ref{sec:data}--\ref{sec:training}), followed by DGP validation confirming that quality sensitivity is a real and stable rater trait (Section~\ref{sec:dgp-validation}). We then evaluate \QSMF{} against the baseline on out-of-sample prediction (Section~\ref{sec:oos}), sample efficiency (Section~\ref{sec:sample-efficiency}), attack robustness (Section~\ref{sec:attacks}), and synthetic ground-truth recovery (Section~\ref{sec:synthetic}). Section~\ref{sec:limitations} discusses limitations and directions for future work.

\section{The Standard Community Notes Model}\label{sec:baseline-model}

\paragraph{Notation.}
Let $\mathcal{I}$ denote the set of raters and $\mathcal{N}$ the set of notes. The set of observed rating pairs is $\Omega \subseteq \mathcal{I} \times \mathcal{N}$. For each note $j$, let $\mathcal{I}_j = \{i : (i,j) \in \Omega\}$ denote the set of raters who rated it, with $n_j = |\mathcal{I}_j|$. For each rater $i$, let $\mathcal{N}_i = \{j : (i,j) \in \Omega\}$ denote the set of notes they rated.

The standard CN model \cite{wojcik2022} assumes ratings are generated by
\begin{equation}\label{eq:baseline-dgp}
    r_{ij} = \mu + \alpha_i + \beta_j + \gamma_i \delta_j + \epsilon_{ij}, \quad \epsilon_{ij} \overset{\mathrm{iid}}{\sim} \textsc{Normal}(0, \sigma^2),
\end{equation}
where $\mu$ is a global intercept, $\alpha_i$ is rater bias, $\beta_j$ is note quality, and $\gamma_i \delta_j$ captures ideological alignment along a latent one-dimensional ideology axis. The model is fit by minimising the regularised loss
\begin{equation}\label{eq:baseline-objective}
\mathcal{L}(\theta) = \frac{1}{2}\sum_{(i,j) \in \Omega} \left(r_{ij} - \mu - \alpha_i - \beta_j - \gamma_i \delta_j\right)^2
+ \frac{\lambda_\alpha}{2}\sum_{i \in \mathcal{I}} \alpha_i^2
+ \frac{\lambda_\beta}{2}\sum_{j \in \mathcal{N}} \beta_j^2
+ \frac{\lambda_\gamma}{2}\sum_{i \in \mathcal{I}} \gamma_i^2
+ \frac{\lambda_\delta}{2}\sum_{j \in \mathcal{N}} \delta_j^2.
\end{equation}
A standard connection between penalised likelihood and Bayesian inference \cite{hoerl1970,bishop2006} implies that each $L_2$ penalty is equivalent to a zero-mean Gaussian prior on the corresponding parameter in a MAP estimation framework. Regularisation serves three purposes: it resolves the scale and rotation ambiguities inherent in the bilinear term $\gamma_i \delta_j$ by selecting a unique factorisation; it controls variance of the parameter estimates, which is critical for matrix factorization over sparse matrices; and it provides sensible defaults for parameters with limited data. In particular, a new rater's ideology $\hat\gamma_i$ is shrunk toward zero, meaning the model treats them as ideologically neutral until enough ratings accumulate to estimate their position.

This prior is both a strength and a vulnerability. Without it, ideology estimates for low-activity raters would be noise-dominated, making the algorithm unstable. But the algorithm rewards agreement from ideologically \emph{diverse} raters, and a rater with $\hat\gamma_i \approx 0$ appears to be exactly the kind of neutral evaluator whose rating affects the note quality most. 
Recall the de-ideologized rating \(d_{ij}\) from
\eqref{eq:deideologized-intro}. Holding the ideology coordinate fixed, the note quality is estimated as:
\begin{equation}\label{eq:beta-baseline}
    \hat{\beta}_j^{\text{base}} = \frac{\sum_{i \in \mathcal{I}_j} d_{ij}}{n_j + \lambda_\beta}.
\end{equation}
Thus, at the coordinate-update level, every rater contributes equally to the
quality estimate. The exact conditional note-block solution jointly updates
\((\beta_j,\delta_j)\) and therefore includes a coupling term between the
quality and ideology channels; for completeness we give the formal expression in
Appendix~\ref{app:conditional-note-update}.

\subsection{Why Not Simpler Fixes?}\label{sec:simpler-fixes}

Before introducing \QSMF{}, we consider two natural attempts to improve the baseline model.

\paragraph{Higher-dimensional ideology.}
The baseline model captures ideology with a single factor $\gamma_i \delta_j$. A natural idea is to expand to $k$ ideological dimensions, $\sum_{\ell=1}^{k}\gamma_{i\ell}\delta_{j\ell}$, hoping that a richer ideology model will absorb more of the partisan signal and leave a cleaner quality estimate. A large body of evidence from legislative voting and public opinion research shows that political preferences are often well approximated by a single dominant ideological factor: in the US Congress, one dimension of the DW-NOMINATE scaling explains over 83\% of roll-call voting variation, and the second dimension has been negligible since the early 2000s \cite{poole2007,poole2005}. 

Expanding to more factors risks adding variance. The rating matrix is extremely sparse: a typical rater scores fewer than 1\% of all notes, so each rater's and each note's ideology parameters are estimated from a small, non-random slice of the data. The additional factors may absorb noise rather than meaningful ideological structure, inflating the variance of all downstream estimates (including $\hat\beta_j$).

Importantly, the quality-sensitivity mechanism of \QSMF{} is orthogonal to the choice of number of ideology dimensions. In settings where a richer ideology model is warranted, the $\rho_i$ parameterisation applies unchanged. Appendix~\ref{app:k2} replicates both the out-of-sample prediction and rolling-window sample efficiency experiments with $k=2$ ideology dimensions, confirming that the gains from \QSMF{} are not an artefact of the one-dimensional assumption.

\paragraph{Inverse-variance weighting (IVW).}
A second natural idea draws on a classical result in statistics: when observations have unequal variances, the inverse-variance weighted estimator is the maximum likelihood estimator and achieves the minimum variance among all linear unbiased estimators (the Gauss--Markov theorem under heteroskedasticity). Heteroskedastic matrix factorization exploits exactly this principle: by modelling observation-level noise and reweighting accordingly, it extracts more signal per observation, which should translate directly into faster convergence and improved sample efficiency. A natural choice is to model the variance of each rating as a product of a rater-specific factor and a note-specific factor, $\mathrm{Var}(\epsilon_{ij}) = \sigma_i^2 \eta_j^2$, estimate these from the baseline residuals, and reweight by $w_{ij} = 1/(\hat\sigma_i^2 \hat\eta_j^2)$. This gives more influence to raters whose behaviour the model predicts well and to notes with low residual dispersion.

Despite this strong statistical motivation, the approach systematically upweights exactly the wrong raters. A pure partisan whose ratings are fully explained by $\gamma_i \delta_j$ often has small residuals and therefore receives \emph{high} weight. A rater who always rates ``helpful'' or always rates ``not helpful'' is similarly predictable (absorbed by $\hat\alpha_i$) and similarly rewarded. Meanwhile, ideologically moderate or multidimensional raters (those whose true ideology is poorly approximated by a single $\gamma_i$) have inflated residuals and are downweighted, despite often being the most thoughtful evaluators. Decomposing the variance into rater and note factors does not resolve this, because the pathology is in the rater component $\hat\sigma_i^2$: it conflates predictability with genuine quality tracking. 

For the diagnostic in Figure~\ref{fig:ivw-ideology}, we therefore construct a rater-level IVW reputation on the clean baseline fit by taking the inverse of each rater's regularized average squared residual after normalizing each observation by the corresponding note-level residual MSE. Figure~\ref{fig:ivw-ideology} shows that this inverse-variance reputation proxy rises with ideological extremity, so the scheme systematically amplifies ideologically extreme raters. This makes the algorithm even more vulnerable to partisan attacks on proposed notes.

\begin{figure}[t]
\centering
\includegraphics[width=0.5\linewidth]{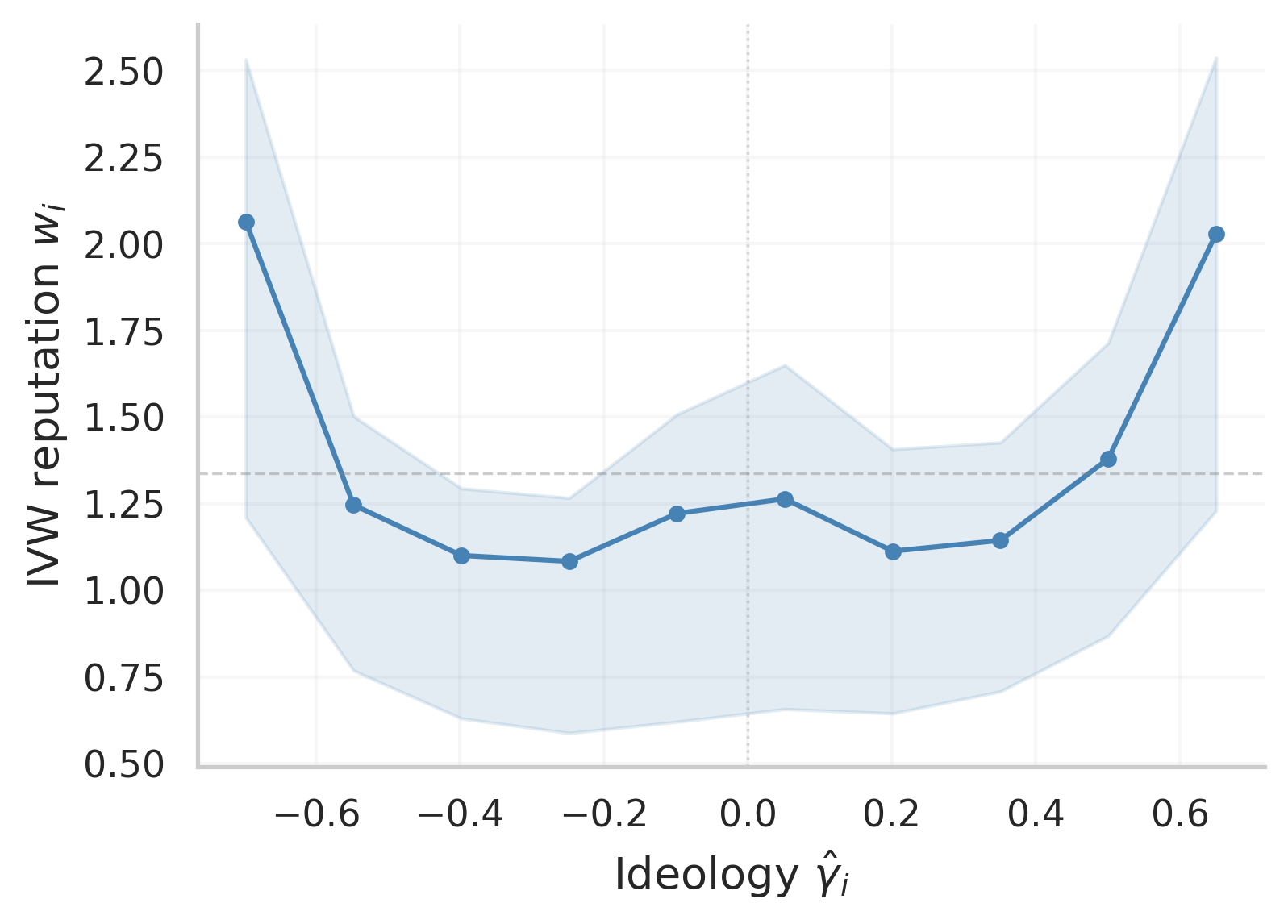}
\caption{Inverse-variance weighting rewards predictability rather than quality tracking. The plotted IVW reputation is the inverse of a regularized, note-normalized residual from the clean baseline fit on the CN data from the 6 months prior the 2024 US presidential elections, and it rises sharply with ideology intensity $|\hat\gamma_i|$.}
\label{fig:ivw-ideology}
\end{figure}

\QSMF{} retains the desirable property of inverse-variance weighting (highly consistent, quality-tracking raters are effectively upweighted). Rather than rewarding predictability through \emph{any} channel, $\hat\rho_i$ is explicitly tied to alignment with the learned note quality values $\hat\beta_j$ on previously rated notes, after ideology has been partialled out. A rater who is predictable because of ideology alone receives low $\hat\rho_i$; a rater who is predictable because they track quality receives high $\hat\rho_i$. The distinction is precisely the one that inverse-variance weighting cannot make.

\section{Quality-Sensitive Matrix Factorization}\label{sec:stage2}

We now discuss \QSMF{}. 
We model observed ratings using the following data-generating process.

\begin{assumption}[Data generation process (DGP)]\label{ass:dgp-homo}
We assume that the ratings can be modeled as:
\begin{equation}\label{eq:dgp-homo}
    r_{ij} = \mu + \alpha_i + \rho_i \beta_j + \gamma_i \delta_j + \epsilon_{ij}, \quad \epsilon_{ij} \overset{\mathrm{iid}}{\sim} \textsc{Normal}(0, \sigma^2).
\end{equation}
with \(\rho_i \ge 0\) for all raters \(i\).
\end{assumption}

Relative to the baseline model in~\eqref{eq:baseline-dgp}, the only change is that the note-quality term $\beta_j$ is replaced by the bilinear term $\rho_i \beta_j$. This preserves the interpretation of $\beta_j$ as latent note quality, but allows raters to differ in how strongly their ratings reflect that quality. Equivalently, the rating decomposes as
\begin{equation}
    r_{ij}
    =
    \underbrace{\mu + \alpha_i}_{\text{rater baseline}}
    +
    \underbrace{\rho_i \beta_j}_{\text{quality channel}}
    +
    \underbrace{\gamma_i \delta_j}_{\text{ideology channel}}
    +
    \underbrace{\epsilon_{ij}}_{\text{noise}}.
\end{equation}

We impose $\rho_i \ge 0$, to ensure identifiability between the two channels.

We fit the model by regularised least squares over the observed entries $\Omega$. As in the baseline CN model, the regularisation terms play both statistical and interpretive roles: they stabilise estimation in a sparse matrix, select a canonical factorisation, and encode sensible defaults.

\begin{equation}\label{eq:objective-homo}
\begin{aligned}
    \mathcal{L}(\theta) &= \frac{1}{2}\sum_{(i,j) \in \Omega} \left(r_{ij} - \mu - \alpha_i - \rho_i \beta_j - \gamma_i \delta_j\right)^2 \\
    &\quad + \frac{\lambda_\rho}{2} \sum_{i \in \mathcal I} (\rho_i - 1)^2
    + \frac{\lambda_\alpha}{2} \sum_{i \in \mathcal I} \alpha_i^2
    + \frac{\lambda_\gamma}{2} \sum_{i \in \mathcal I} \gamma_i^2 
    + \frac{\lambda_\beta}{2} \sum_{j \in \mathcal N} \beta_j^2
    + \frac{\lambda_\delta}{2} \sum_{j \in \mathcal N} \delta_j^2.
\end{aligned}
\end{equation}

The prior center for $\rho_i$ is important. Centering the penalty at $\rho_i = 1$ makes the baseline model the default. 
Recall the de-ideologized rating \(d_{ij}\) from
\eqref{eq:deideologized-intro}.
Holding the ideology coordinate fixed, the quality-coordinate update is
\begin{equation}\label{eq:beta-update-model}
    \hat{\beta}_j = \frac{\sum_{i \in \mathcal{I}_j} \hat{\rho}_i \, d_{ij}}{S_j + \lambda_\beta},
    \qquad
    S_j = \sum_{i \in \mathcal{I}_j} \hat{\rho}_i^2.
\end{equation}
Compared with the baseline update~\eqref{eq:beta-baseline}, this is the key structural change: note quality is no longer the simple average of de-ideologized ratings. Instead, each rater's contribution is weighted by $\hat\rho_i$. The term $S_j$ plays the role of an effective sample size. Two notes may receive the same number of ratings, but the note rated by more quality-sensitive raters is treated as having more usable information.
This scalar formula should be read as the quality-coordinate view of the note
subproblem. The exact conditional note-block solution jointly updates
\((\beta_j,\delta_j)\) and includes an overlap term capturing the empirical
alignment between the quality and ideology channels on the rater pool; see Appendix~\ref{app:conditional-note-update}. 

The ideology update remains structurally similar:
\begin{equation}
    \hat{\delta}_j
    =
    \frac{\sum_{i \in \mathcal{I}_j} \hat{\gamma}_i \, (r_{ij} - \hat{\mu} - \hat{\alpha}_i - \hat{\rho}_i \hat{\beta}_j)}
    {\sum_{i \in \mathcal{I}_j} \hat{\gamma}_i^2 + \lambda_\delta}.
\end{equation}
Thus \QSMF{} does not replace the bridging mechanism; it augments it with a separate quality-sensitivity channel. The ideology term continues to absorb partisan alignment, while the $\rho_i \beta_j$ term governs how much each rater should influence the learned quality score.

We discuss the induced update rule for $\rho_i$ in the next section.

\subsection{Identifiability}\label{sec:ident}
 \QSMF{} is non-convex and contains bilinear terms, so identification requires conventions. Our regularisation and sign constraints jointly select one representative from the usual equivalence class of factorizations.
The product $\rho_i \beta_j$ is invariant under the rescaling
\(
(\rho_i,\beta_j) \mapsto (c\rho_i,\beta_j/c),~ c>0.
\)
 The penalties $\lambda_\rho(\rho_i-1)^2$ and $\lambda_\beta \beta_j^2$ anchor the scale by preferring values of $\rho_i$ near one and values of $\beta_j$ with controlled magnitude. 

With two bilinear channels, one must also distinguish the ``quality'' factor from the ``ideology'' factor. In \QSMF{}, these roles are separated by modeling restrictions. The quality channel has nonnegative rater factors $\rho_i$ and is regularised toward the baseline value $1$; the ideology channel has signed factors $\gamma_i$ and is regularised toward $0$. In addition, we use a two-timescale training procedure, so the ideology factor begins aligned with the standard CN interpretation. Together, these conventions give the two channels distinct meanings and make the learned decomposition stable in practice.

\section{Theoretical and Structural Properties of \QSMF{}}\label{sec:theory}

The goal of this section is to quantify when heterogeneous weighting improves
estimation, interpret the learned parameters, and show
why the model tends to attenuate certain forms of manipulation.

Let
\(
\theta^*
=
\bigl(
\mu^*,\{\alpha_i^*\},\{\rho_i^*\},\{\beta_j^*\},\{\gamma_i^*\},\{\delta_j^*\}
\bigr)
\)
denote the true parameters under the DGP in Assumption~\ref{ass:dgp-homo}. Let
\(\mathcal L(\theta)\) denote the regularized objective in
\eqref{eq:objective-homo}, and let
\(
\hat\theta
=
\bigl(
\hat\mu,\{\hat\alpha_i\},\{\hat\rho_i\},\{\hat\beta_j\},\{\hat\gamma_i\},\{\hat\delta_j\}
\bigr)
\)
denote a fixed point of this objective. We use
\(
\hat d_{ij}
:=
r_{ij}-\hat\mu-\hat\alpha_i-\hat\gamma_i\hat\delta_j
\)
for the fitted de-ideologized rating, and
\(
d^*_{ij}
:=
r_{ij}-\mu^*-\alpha_i^*-\gamma_i^*\delta_j^*
\)
for its oracle counterpart under the DGP.
We now turn to the statistical and structural consequences of that update
structure. First, we quantify the statistical gain from heterogeneous weighting
in an oracle benchmark where the nuisance channels have already been removed.
Second, we characterize the fitted parameter \(\rho_i\), and discuss the structural implications of its update rule
for simple, persistently uninformative attack patterns. Finally, we derive a diagnostic for whether heterogeneous quality sensitivity is
supported by the data.

\subsection{Efficiency Under Heterogeneous \texorpdfstring{$\rho_i$}{rho_i}}
\label{sec:efficiency}

We now isolate the basic statistical benefit of allowing raters to have
different \(\rho_i\) values. Once bias and ideology
have been removed, a rater with larger \(\rho_i\) reacts more strongly to note
quality than a rater with smaller \(\rho_i\). Such a rater therefore carries
more information about \(\beta_j\). If the \(\rho_i\) values truly differ across
raters, then treating all raters equally discards information.
We work in an oracle reduced model in which the
nuisance channels have already been removed. 

Fix a note \(j\). Under the DGP in Assumption~\ref{ass:dgp-homo}, the oracle
de-ideologized rating \(d^*_{ij}\) satisfies
\begin{equation}
\label{eq:oracle-reduced-model}
d^*_{ij}
=
\rho_i^* \beta_j^* + \epsilon_{ij},
\qquad
\mathbb E[\epsilon_{ij}] = 0,
\qquad
\Var(\epsilon_{ij}) = \sigma^2.
\end{equation}
Thus, after the nuisance terms have been removed, the only remaining difference
across raters is the multiplicative factor \(\rho_i^*\). 
Conditional on the rater pool \(\mathcal I_j\), estimating \(\beta_j^*\) from
\eqref{eq:oracle-reduced-model} is simply a one-dimensional regression problem
with heterogeneous \((\rho_i^*)_{i\in\mathcal I_j}\). The
oracle estimator is therefore
\begin{equation}
\label{eq:oracle-weighted-est}
\hat\beta_j^{\rho}
=
\frac{\sum_{i\in\mathcal I_j}\rho_i^* d^*_{ij}}
{\sum_{i\in\mathcal I_j}(\rho_i^*)^2}.
\end{equation}

Assume that
\(
\sum_{i\in\mathcal I_j}(\rho_i^*)^2 > 0,
\)
so that note \(j\) is rated by at least one rater with nonzero \(\rho_i^*\).
Since the model imposes \(\rho_i^* \ge 0\), this implies
\(
\bar\rho_j^*
:=
\frac{1}{n_j}\sum_{i\in\mathcal I_j}\rho_i^*
> 0.
\)
We also assume that, conditional on the rater pool \(\mathcal I_j\), the
errors \(\{\epsilon_{ij} : i\in\mathcal I_j\}\) are independent across raters.

\begin{proposition}[Oracle efficiency under heterogeneous \(\rho_i\)]
\label{prop:efficiency}
Under \eqref{eq:oracle-reduced-model} and cross-rater independence of the
errors \(\epsilon_{ij}\), the estimator
\(\hat\beta_j^{\rho}\) in \eqref{eq:oracle-weighted-est} is unbiased for
\(\beta_j^*\) and has variance \( \Var(\hat\beta_j^{\rho})
= \frac{\sigma^2}{\sum_{i\in\mathcal I_j}(\rho_i^*)^2}. \)

Among estimators that aggregate the same oracle de-ideologized ratings using a
common coefficient on every rater,
\(
\tilde\beta_j = c_j \sum_{i\in\mathcal I_j} d^*_{ij},
\)
the unique unbiased choice is \( \tilde\beta_j^{\mathrm{uni}}
= \frac{1}{n_j\bar\rho_j^*}\sum_{i\in\mathcal I_j} d^*_{ij},  \)
and its variance is \( \Var(\tilde\beta_j^{\mathrm{uni}}) =\frac{\sigma^2}{n_j(\bar\rho_j^*)^2}. \)
\end{proposition}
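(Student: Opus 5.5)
The argument is a direct computation in the reduced model \eqref{eq:oracle-reduced-model}, conditional on the rater pool \(\mathcal I_j\). We treat \((\rho_i^*)_{i\in\mathcal I_j}\) and \(\beta_j^*\) as fixed, so the only randomness is in the errors \(\epsilon_{ij}\), which are independent across raters with mean zero and variance \(\sigma^2\). Throughout, write \(S_j^* := \sum_{i\in\mathcal I_j}(\rho_i^*)^2 > 0\).

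\textbf{Weighted estimator.} Substitute \(d^*_{ij} = \rho_i^*\beta_j^* + \epsilon_{ij}\) into \eqref{eq:oracle-weighted-est}. This gives
\[
\hat\beta_j^{\rho} = \beta_j^* + \frac{1}{S_j^*}\sum_{i\in\mathcal I_j}\rho_i^*\epsilon_{ij}.
\]
Unbiasedness follows by linearity of expectation and \(\mathbb E[\epsilon_{ij}]=0\). For the variance, cross-rater independence kills the covariance terms, so
\[
\Var(\hat\beta_j^{\rho}) = \frac{1}{(S_j^*)^2}\sum_{i\in\mathcal I_j}(\rho_i^*)^2\sigma^2 = \frac{\sigma^2}{S_j^*}.
\]
This is the claimed formula.

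\textbf{Common-coefficient estimators.} For \(\tilde\beta_j = c_j\sum_{i\in\mathcal I_j} d^*_{ij}\), the same substitution gives
\[
\tilde\beta_j = c_j\, n_j\bar\rho_j^*\,\beta_j^* + c_j\sum_{i\in\mathcal I_j}\epsilon_{ij},
\qquad
\mathbb E[\tilde\beta_j] = c_j\, n_j\bar\rho_j^*\,\beta_j^*.
\]
Unbiasedness must hold for every value of \(\beta_j^*\), not just one; otherwise \(\beta_j^*=0\) would make any \(c_j\) trivially unbiased. This forces \(c_j n_j\bar\rho_j^* = 1\). Since \(\bar\rho_j^*>0\), which follows from \(S_j^*>0\) and \(\rho_i^*\ge 0\) as noted before the proposition, the unique solution is \(c_j = 1/(n_j\bar\rho_j^*)\). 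Independence then gives
\[
\Var(\tilde\beta_j^{\mathrm{uni}}) = c_j^2\, n_j\sigma^2 = \frac{\sigma^2}{n_j(\bar\rho_j^*)^2}.
\]

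\textbf{Where care is needed.} Every step is routine; the only point requiring attention is the interpretation of "unbiased" in the uniqueness claim. I will state explicitly that it means unbiasedness for all \(\beta_j^*\in\mathbb R\), with \(\mathcal I_j\) and the \(\rho_i^*\) held fixed. As an optional remark, not part of the statement, the two variances can be compared via Cauchy--Schwarz: \(S_j^* \ge n_j(\bar\rho_j^*)^2\), with equality iff all \(\rho_i^*\) are equal. So the weighted estimator is never worse than the uniform one, and is strictly better under genuine heterogeneity.
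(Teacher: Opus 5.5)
Your proposal is correct and matches the paper's proof essentially line for line. Like the paper, you substitute $d^*_{ij}=\rho_i^*\beta_j^*+\epsilon_{ij}$, take expectations, use cross-rater independence for both variances, and obtain uniqueness from $c_j n_j\bar\rho_j^*=1$. Your explicit requirement that unbiasedness hold for every $\beta_j^*$ is a worthwhile sharpening: the paper's ``unique when $\bar\rho_j^*\neq 0$'' silently relies on it to exclude the degenerate case $\beta_j^*=0$.
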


\begin{proof}
See Appendix~\ref{app:proof-efficiency}.
\end{proof}

The proposition compares two ways of using the same de-ideologized ratings. The
estimator \(\hat\beta_j^\rho\) respects the heterogeneity in the
\(\rho_i^*\) values, whereas \(\tilde\beta_j^{\mathrm{uni}}\) treats every
rater symmetrically. The resulting variance gap is explicit.

\begin{corollary}[Variance inflation from equal weighting]
\label{cor:variance-ratio}
Under the same assumptions,
\begin{equation}
\label{eq:variance-ratio}
\frac{\Var(\tilde\beta_j^{\mathrm{uni}})}
{\Var(\hat\beta_j^{\rho})}
=
\frac{\overline{(\rho^*)^2}_j}{(\bar\rho_j^*)^2}
=
1+\frac{\Var_{\mathcal I_j}(\rho_i^*)}{(\bar\rho_j^*)^2},
\end{equation}
where
\[
\overline{(\rho^*)^2}_j
:=
\frac{1}{n_j}\sum_{i\in\mathcal I_j}(\rho_i^*)^2.
\]
Hence \(\tilde\beta_j^{\mathrm{uni}}\) is strictly less efficient whenever the
\(\rho_i^*\) values are not all identical on the rater pool for note \(j\).
\end{corollary}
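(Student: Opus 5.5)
The plan is to take the two variance formulas from Proposition~\ref{prop:efficiency} as given and form their ratio directly. No new probabilistic argument is needed, because the proposition already gives, conditional on the rater pool \(\mathcal I_j\),
\[
\Var(\hat\beta_j^{\rho}) = \frac{\sigma^2}{\sum_{i\in\mathcal I_j}(\rho_i^*)^2},
\qquad
\Var(\tilde\beta_j^{\mathrm{uni}}) = \frac{\sigma^2}{n_j(\bar\rho_j^*)^2}.
\]
Both quantities are well defined and positive. This follows from the standing assumption \(\sum_{i\in\mathcal I_j}(\rho_i^*)^2>0\) together with \(\rho_i^*\ge 0\), which gives \(\bar\rho_j^*>0\).

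First I will divide the two expressions. The factor \(\sigma^2\) cancels, leaving \(\sum_i(\rho_i^*)^2/(n_j(\bar\rho_j^*)^2)\). Rewriting the numerator as \(n_j\,\overline{(\rho^*)^2}_j\) using the definition in the corollary gives the first equality in \eqref{eq:variance-ratio}.

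For the second equality I will use the empirical variance identity over the pool, with the \(1/n_j\) normalization:
\[
\Var_{\mathcal I_j}(\rho_i^*) := \frac{1}{n_j}\sum_{i\in\mathcal I_j}(\rho_i^*-\bar\rho_j^*)^2 = \overline{(\rho^*)^2}_j-(\bar\rho_j^*)^2 .
\]
This is obtained by expanding the square and using \(\sum_i \rho_i^* = n_j\bar\rho_j^*\). Dividing through by \((\bar\rho_j^*)^2\) then yields \(1+\Var_{\mathcal I_j}(\rho_i^*)/(\bar\rho_j^*)^2\).

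The strict inequality follows because \(\Var_{\mathcal I_j}(\rho_i^*)\) is a sum of squares. It vanishes exactly when every \(\rho_i^*\) equals \(\bar\rho_j^*\), so if the \(\rho_i^*\) are not all identical on \(\mathcal I_j\), the ratio is strictly greater than \(1\). Equivalently, this is the equality case of Cauchy--Schwarz, \((\sum_i\rho_i^*)^2\le n_j\sum_i(\rho_i^*)^2\).

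There is no real obstacle. The only point needing care is to state explicitly that \(\Var_{\mathcal I_j}\) denotes the population (\(1/n_j\)) variance over the rater pool, since this is what makes the identity exact. I will also note that positivity of \(\bar\rho_j^*\) is needed so the ratio is defined.
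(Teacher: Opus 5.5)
Your proposal is correct and follows the same route as the paper, which treats the corollary as immediate from Proposition~\ref{prop:efficiency}: divide the two variance formulas, then apply the identity \(\overline{(\rho^*)^2}_j-(\bar\rho_j^*)^2=\Var_{\mathcal I_j}(\rho_i^*)\), which is positive unless all \(\rho_i^*\) coincide. Your explicit note that \(\Var_{\mathcal I_j}\) must be the \(1/n_j\)-normalized pool variance is a useful clarification that the paper leaves implicit.
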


Corollary~\ref{cor:variance-ratio} formalizes the price of ignoring
heterogeneity. The ratio in \eqref{eq:variance-ratio} is always at least one,
with equality only when all \(\rho_i^*\) values are the same. 
It is important to interpret this result at the right level. Proposition~\ref{prop:efficiency}
does not claim that the fully estimated \QSMF{} estimator uniformly dominates
every possible alternative in finite samples. Rather, it isolates the basic
statistical gain available in the oracle reduced model: once the nuisance
channels have been removed, heterogeneity in \(\rho_i^*\) is itself a usable
source of information, and any estimator that ignores it leaves efficiency on
the table. The rolling-window experiments in Section~\ref{sec:sample-efficiency} confirm this prediction empirically: \QSMF{} reaches comparable note-quality accuracy with 26--40\% fewer ratings than the baseline.

\subsection{Interpreting \texorpdfstring{$\rho_i$}{rho_i} and the Attenuation of Naive Attacks}
\label{sec:rho-update}

At a fixed point, the \(\rho_i\)-subproblem is
\begin{equation}
\label{eq:rho-subproblem}
\min_{\rho\ge 0}\;
\frac12\sum_{j\in\mathcal N_i}(\hat d_{ij}-\rho\hat\beta_j)^2
+\frac{\lambda_\rho}{2}(\rho - 1)^2,
\end{equation}
where \(\hat d_{ij}\) is the fitted de-ideologized rating defined at the
start of this section.
Thus, conditional on the other fitted parameters, \(\rho_i\) is the coefficient
in a one-dimensional nonnegative ridge regression of de-ideologized ratings on
the fitted quality axis. Define
\[
A_i := \sum_{j\in\mathcal N_i}\hat d_{ij}\hat\beta_j,
\qquad
B_i := \sum_{j\in\mathcal N_i}\hat\beta_j^2,
\qquad
D_i := \sum_{j\in\mathcal N_i}\hat d_{ij}^2.
\]
On solving~\eqref{eq:rho-subproblem}, we get
\[
\hat\rho_i
=
\frac{(A_i + \lambda_\rho)_+}{B_i+\lambda_\rho},
\]
where \((x)_+ := \max\{x,0\}\); see
Appendix~\ref{app:rho-update-derivation} for the derivation from
\eqref{eq:rho-subproblem}. This gives the main interpretation of
\(\hat\rho_i\). The alignment term \(A_i\) is large when rater \(i\)'s
de-ideologized ratings move with the fitted quality, while the
regularizer \(\lambda_\rho\) pulls the estimate toward the baseline value
\(\rho_i=1\). 

The parameter \(\hat\rho_i\) measures the \emph{magnitude} of quality tracking:
how strongly rater~\(i\)'s de-ideologized ratings respond to note quality. A
complementary question is what \emph{fraction} of the cross-note variation in
those de-ideologized ratings is explained by the quality signal. Two raters with
identical \(\hat\rho_i\) can differ substantially in this fraction if one has
noisier de-ideologized residuals. Define the fitted quality-channel \(R^2\) as
\begin{equation}
\label{eq:partial-r2}
R^{2,\mathrm{qual}}_i
:=
1-
\frac{\sum_{j\in\mathcal N_i}(\hat d_{ij}-\hat\rho_i\hat\beta_j)^2}{D_i},
\end{equation}
with the convention \(R^{2,\mathrm{qual}}_i=0\) when \(D_i=0\).

\paragraph{Unregularized case.}
The connection between \(\hat\rho_i\) and \(R^{2,\mathrm{qual}}_i\) is most
transparent without regularization. Setting \(\lambda_\rho=0\), the
\(\rho\)-update reduces to
\(\hat\rho_i = (A_i)_+/B_i\). When \(A_i>0\) (so that
\(\hat\rho_i>0\)), the normal equation gives \(A_i = \hat\rho_i B_i\).
This yields
\begin{equation}\label{eq:r2-decomp}
R^{2,\mathrm{qual}}_i
=
\frac{2\hat\rho_i A_i - \hat\rho_i^2 B_i}{D_i}
=
\frac{2\hat\rho_i^2 B_i - \hat\rho_i^2 B_i}{D_i}
=
\hat\rho_i^{\,2}\;\frac{B_i}{D_i};
\end{equation}
when \(A_i \le 0\), \(\hat\rho_i=0\) and
\(R^{2,\mathrm{qual}}_i=0\). The takeaway is direct: a higher \(\hat\rho_i\)
means that a larger share of the rater's de-ideologized variation is genuine
quality signal rather than noise.

\paragraph{Population limit.}
Under the oracle reduced model~\eqref{eq:oracle-reduced-model}, as the number
of notes rated by rater~\(i\) grows, the law of large numbers gives
\(B_i^*/|\mathcal N_i| \to \E[\beta_j^{*2}]\)
and
\(D_i^*/|\mathcal N_i| \to (\rho_i^*)^2\E[\beta_j^{*2}]+\sigma^2\),
where \(B_i^*=\sum_j\beta_j^{*2}\),
\(D_i^*=\sum_j d_{ij}^{*2}\), and the expectations are over the distribution
of note qualities. The oracle quality-channel \(R^2\) therefore converges to
\begin{equation}\label{eq:pop-r2}
R^{2,\mathrm{pop}}_i
\;=\;
\frac{(\rho_i^*)^2\,\E[\beta_j^{*2}]}
     {(\rho_i^*)^2\,\E[\beta_j^{*2}]+\sigma^2},
\end{equation}
which is strictly increasing in \(\rho_i^*\). In the population, quality
sensitivity and quality-channel \(R^2\) are linked by a one-to-one monotone
mapping: raters with higher \(\rho_i^*\) explain a larger share of their
de-ideologized variation through quality.

\paragraph{Interpretation as attack attenuation.}
The same update formula also clarifies why \QSMF{} attenuates naive attacks.
In the baseline CN model, every rating enters the quality channel with equal,
fixed influence. In \QSMF{}, influence is mediated by \(\hat\rho_i\), which is
continuously adjusted according to observed alignment with the fitted quality
signal. 
When a rater's behavior is uninformative, \(A_i\) is near zero, so
\(
\hat\rho_i \approx \frac{\lambda_\rho}{B_i+\lambda_\rho}.
\)
Observe that \(\hat\rho_i\) is estimated from the full
rater history \(\mathcal N_i\) through the aggregate quantities \(A_i\) and
\(B_i\), rather than just from newly rated notes. In practice, this means
that the update is typically dominated by notes whose quality estimates
\(\hat\beta_j\) have already stabilized because they have accumulated many
ratings. As a result, \(\hat\rho_i\) does not respond too sharply to noisy
early \(\hat\beta_j\) values on newly rated notes; instead, it moves mainly in
response to persistent patterns of alignment or misalignment with the fitted
quality signal.

Relative to the baseline, which provides no feedback mechanism, \QSMF{} imposes
an ex post cost on low quality ratings. The attenuation grows with the severity
and persistence of the behavior, but it should be read as a structural result
for naive attacks rather than a full strategyproofness guarantee. The corruption experiments in Section~\ref{sec:individual-attacks} confirm that attackers (partisans, coin-flippers, and constant raters) receive sharply depressed \(\hat\rho_i\) values, and the coordinated attack experiments in Section~\ref{sec:targeted-attacks} show that this attenuation translates into smaller displacement of attacked note scores.

\subsection{A Score-Based Diagnostic for Heterogeneous Quality Sensitivity}
\label{sec:score-test}

Before using \QSMF{}, it is useful to ask whether the homogeneous-sensitivity
assumption of the baseline Community Notes model is empirically reasonable. If
\(\rho_i\approx 1\) for all raters, then introducing heterogeneous quality
sensitivity adds parameters without capturing real structure. The baseline model
implicitly assumes that, after removing each rater's bias and ideological
component, every rater's de-ideologized ratings respond to note quality with the
same slope. If this slope varies systematically across raters, the
homogeneous-sensitivity assumption is violated.

\textbf{The slope statistic.}
Using the fitted baseline parameters, define rater~\(i\)'s baseline
de-ideologized rating
\(
d_{ij}^{\mathrm{base}}
=
r_{ij}
-\hat\mu^{\mathrm{base}}
-\hat\alpha_i^{\mathrm{base}}
-\hat\gamma_i^{\mathrm{base}}\hat\delta_j^{\mathrm{base}}
\)
and the OLS slope of these de-ideologized ratings on the baseline quality
estimates:
\[
s_i
=
\frac{\sum_{j\in\mathcal N_i}
      d_{ij}^{\mathrm{base}}\,\hat\beta_j^{\mathrm{base}}}
     {\sum_{j\in\mathcal N_i}(\hat\beta_j^{\mathrm{base}})^2}.
\]
Under the homogeneous-sensitivity null (\(\rho_i=1\) for all~\(i\)), this
slope should be near~\(1\) for every rater.

\textbf{Connection to the Rao score.}
The baseline residual is
\(
e_{ij}^{\mathrm{base}}
= r_{ij}
  -\hat\mu^{\mathrm{base}}
  -\hat\alpha_i^{\mathrm{base}}
  -\hat\beta_j^{\mathrm{base}}
  -\hat\gamma_i^{\mathrm{base}}\hat\delta_j^{\mathrm{base}},
\)
so the de-ideologized rating decomposes as
\(d_{ij}^{\mathrm{base}} = e_{ij}^{\mathrm{base}} + \hat\beta_j^{\mathrm{base}}\).
Substituting into \(s_i\) and writing
\(B_i^{\mathrm{base}}=\sum_j(\hat\beta_j^{\mathrm{base}})^2\) gives
\[
s_i
=
\frac{\sum_j(e_{ij}^{\mathrm{base}}+\hat\beta_j^{\mathrm{base}})
      \hat\beta_j^{\mathrm{base}}}{B_i^{\mathrm{base}}}
=
1 + \frac{U_i}{B_i^{\mathrm{base}}},
\qquad
U_i := \sum_{j\in\mathcal N_i}\hat\beta_j^{\mathrm{base}}\,
       e_{ij}^{\mathrm{base}}.
\]
The quantity \(U_i\) is the Rao score component
\cite{rao1948large} for the parameter \(\rho_i\) evaluated at the restricted
estimate \(\rho_i=1\): it is the derivative of the log-likelihood of the
unrestricted model with respect to~\(\rho_i\), evaluated at the baseline MLE
(up to a \(1/\sigma^2\) scaling; see Appendix~\ref{app:score-test}). Under the
null, \(\E[U_i]=0\), so \(s_i\) is centered at~\(1\). Systematic deviations of
\(s_i\) from~\(1\) therefore provide evidence of heterogeneous quality
sensitivity.

\textbf{Connection to \QSMF{}.}
The statistic \(s_i\) is the baseline analogue of the quantity that \QSMF{}
estimates as \(\hat\rho_i\). Ignoring regularization and the nonnegativity
constraint in~\eqref{eq:rho-subproblem}, the fitted \(\rho_i\) is exactly the
slope of de-ideologized ratings on the current quality axis. The diagnostic
\(s_i\) computes the same slope using only baseline parameters.
Section~\ref{sec:dgp-validation} applies this diagnostic to real data,
confirming that \(s_i\) exhibits substantial dispersion and strong split-half
reliability, and that the model-estimated \(\hat\rho_i\) correlates highly with
the slope~\(s_i\).

\section{Experimental Setup}
\subsection{Data}\label{sec:data}

We use the publicly released CN dataset from X, which provides all ratings, notes, and note status histories.
Our sample covers all ratings from May--October 2024 (44.99M ratings, 412,381 raters, and 365,431 notes after filtering to raters and notes with more than 10 ratings each). This corresponds to the six months before the 2024 U.S.\ presidential election. We only consider the notes rated by the `core' model on X, i.e., notes on political tweets.

\subsection{Training Procedure}\label{sec:training}

The objective~\eqref{eq:objective-homo} is non-convex due to the bilinear terms $\rho_i \beta_j$ and $\gamma_i \delta_j$. We optimize via alternating minimization with timescale separation \cite{borkar1997}:

\begin{enumerate}[leftmargin=*]
    \item \textbf{Fast step.} Fix $\rho$, optimize $(\mu, \alpha, \beta, \gamma, \delta)$ via Adam \cite{kingma2015} with gradient clipping (max norm 1.0) until convergence or an epoch ceiling.
    \item \textbf{Slow step.} Fix all other parameters, solve $\rho$ in closed form:
    \begin{equation}\label{eq:rho-closed-form}
        \hat{\rho}_i = \max\!\left(0,\; \frac{\sum_{j \in \mathcal N_i} \hat{\beta}_j\, d_{ij} + \lambda_{\mathrm{eff}}}{\sum_{j \in \mathcal N_i} \hat{\beta}_j^2 + \lambda_{\mathrm{eff}}}\right),
    \end{equation}
    where $d_{ij}$ denotes the current de-ideologized residual from
    \eqref{eq:deideologized-intro}, $N = |\Omega|$ is the number of observed
    ratings, and $M = |\mathcal I|$ is the number of raters.  $\lambda_{\mathrm{eff}} = \lambda_\rho \cdot N / M.$ 
\end{enumerate}

These two steps alternate for $R$ rounds. 
For the baseline we only have the fast step until convergence.

\paragraph{Hyperparameters.} We use $R = 5$ rounds since it is sufficient for $\rho$ estimates to stabilize. In all runs we use convergence detection within the fast step, terminating when the loss changes by less than $10^{-6}$ for 20 consecutive epochs. Epoch ceiling is 1000.  We use learning rate $0.1$. Regularization is $\lambda_\alpha = \lambda_\beta = \lambda_\gamma = \lambda_\delta = \lambda_\rho = 0.02$. 

\paragraph{Post-Estimation Normalization}
After training, we apply the identifiability normalization $\hat{\rho}_i \leftarrow \hat{\rho}_i / \bar{\rho}$, $\hat{\beta}_j \leftarrow \hat{\beta}_j \cdot \bar{\rho}$, where $\bar{\rho} = M^{-1}\sum_i \hat{\rho}_i$. This anchors the mean quality sensitivity at 1, ensuring $\hat{\beta}_j$ is on the same scale as the baseline $\hat{\beta}_j$. Wherever applicable, we report the mean squared error in estimates after z-scoring (setting mean to 0 and variance to 1) to get a fair comparison, since finally only the relative values of note quality estimates are needed for classification into helpful or not helpful.

\subsection{What the Full-Data Fit Learns}\label{sec:learned-params}

Before turning to formal validation tests, it is useful to inspect the learned latent variables themselves. Figure~\ref{fig:param-distributions} shows the full-data distributions of the four main quantities.  The 10th, 5th, and 1st percentiles of $\rho$ are 0.786, 0.696, and 0.477, so those raters receive only $0.79\times$, $0.70\times$, and $0.48\times$ of the baseline influence, respectively. At the same time, the median rater remains essentially at baseline. This lower tail is also tunable through $\lambda_\rho$: decreasing $\lambda_\rho$ would widen the $\hat{\rho}$ distribution further and impose stronger attenuation on noisy or strategic raters, while increasing $\lambda_\rho$ would pull the fit back toward the homogeneous baseline. Our adopted value $\lambda_\rho = 0.02$ is intentionally conservative in this sense.

\begin{center}
\small
\begin{tabular}{lccccccc}
\toprule
Percentile & 1 & 5 & 10 & 50 & 90 & 95 & 99 \\
\midrule
Relative weight vs.\ baseline & $0.48\times$ & $0.70\times$ & $0.79\times$ & $1.01\times$ & $1.21\times$ & $1.28\times$ & $1.45\times$ \\
\bottomrule
\end{tabular}
\end{center}

\begin{figure}[t]
\centering
\includegraphics[width=0.8\linewidth]{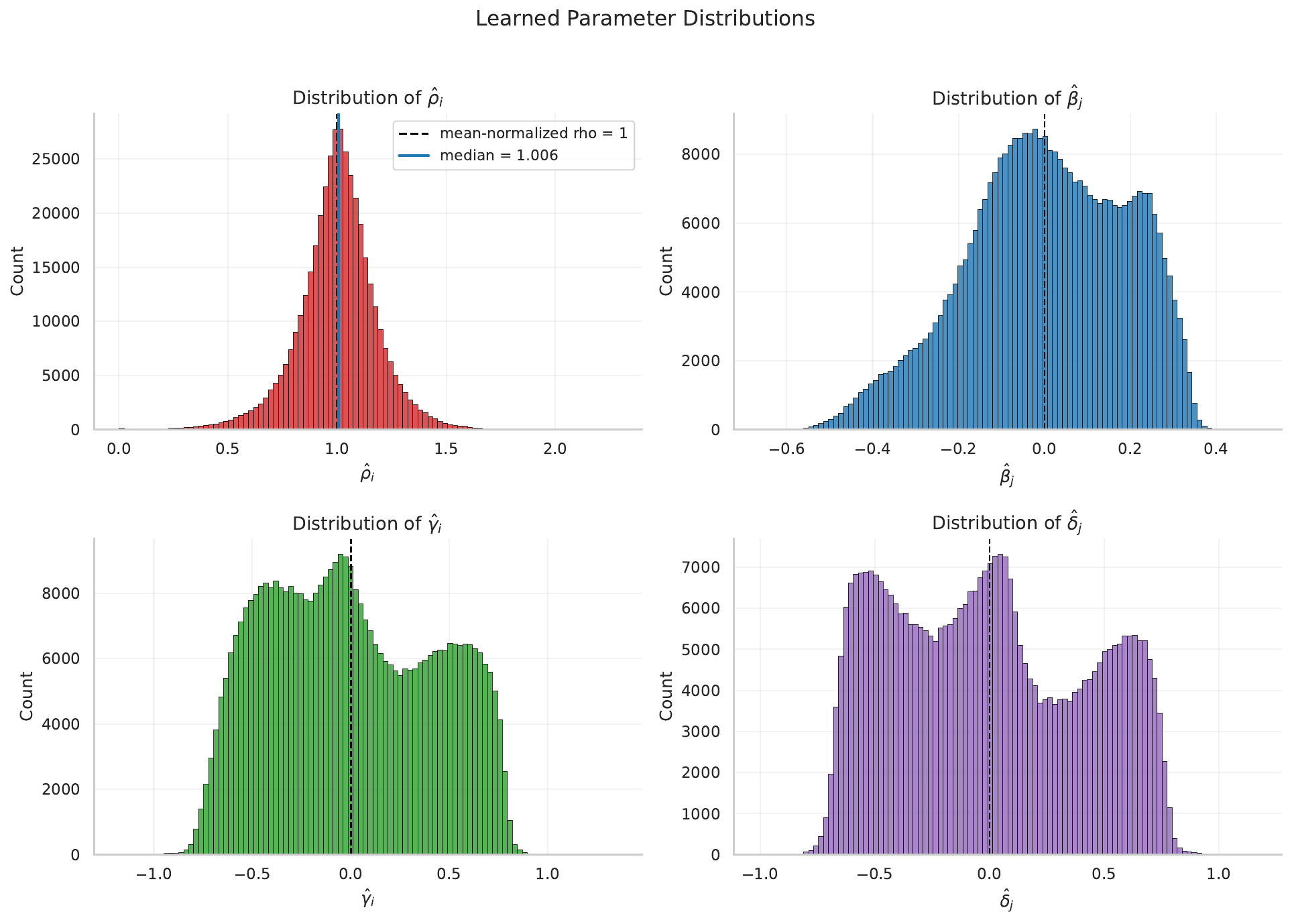}
\caption{Latent-parameter distributions from the full-data \QSMF{} fit. The key object is the lower tail of $\hat{\rho}_i$; the signs of $\hat{\gamma}_i$ and $\hat{\delta}_j$ are arbitrary up to a global flip.}
\label{fig:param-distributions}
\end{figure}
\begin{figure}[h]
\centering
\includegraphics[width=0.5\linewidth]{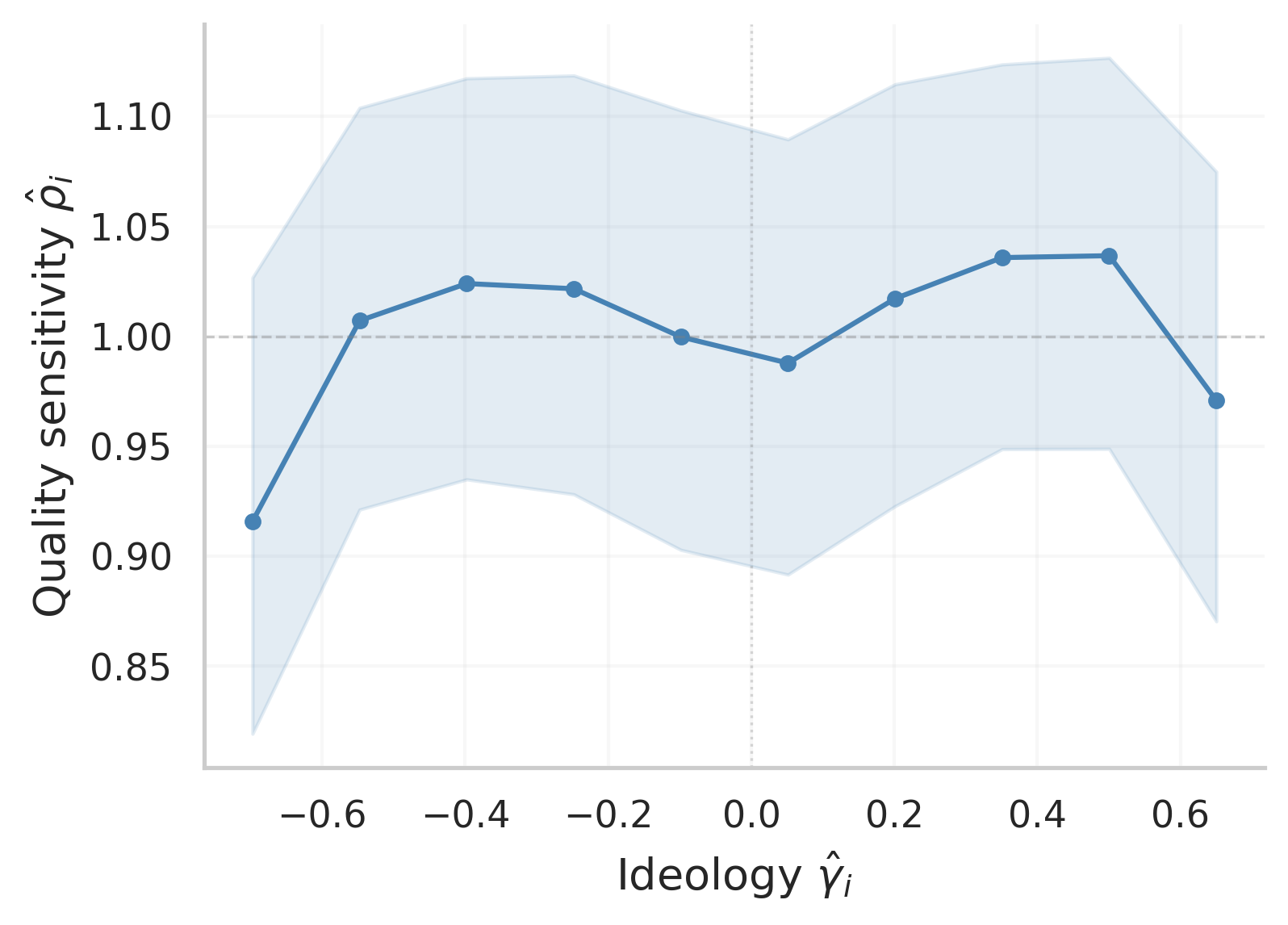}
\caption{Fitted quality sensitivity $\hat\rho_i$ against ideology intensity $|\hat\gamma_i|$. The weak Spearman correlation ($r=-0.075$) shows that $\hat\rho_i$ is not simply an inverse measure of ideological intensity. The shaded region is the 95\% confidence interval. The within-bucket variation is much higher than the variation across ideology buckets.} 
\label{fig:rho-gamma-scatter}
\end{figure}

We also want to check that $\hat{\rho}_i$ is not penalizing specific ideologies or ideological intensities. The relationship with ideology intensity is mild: the Spearman correlation between $\hat{\rho}_i$ and $|\hat{\gamma}_i|$ is $-0.075$. Across the two signs of the fitted ideology axis, the mean $\hat{\rho}$ values are 0.976 and 1.009, a difference of only +0.033. In other words, there is no large sign asymmetry in the fitted ideological coordinate, even though the most ideologically intense tail is attenuated somewhat more strongly on average. Figure~\ref{fig:rho-gamma-scatter} shows the joint distribution of $(\hat\rho_i, \hat\gamma_i)$.

\section{Data Generation Process (DGP) Validation}\label{sec:dgp-validation}

Before evaluating performance, we verify the core premise: raters differ in quality sensitivity, and these differences are stable rather than pure estimation noise. These tests operationalise the diagnostic developed in Section~\ref{sec:score-test}: the model-free slope \(s_i\) is the baseline analogue of \(\hat\rho_i\), and systematic deviations of \(s_i\) from one provide evidence against the homogeneous-sensitivity assumption.

\paragraph{Test 1: Split-Half Slope Consistency}

For each rater $i$ with at least 20 ratings, we randomly split their observations into halves A and B. In each half, we compute the OLS slope $s_i^{(\cdot)} = \Cov(d_{ij}, \hat{\beta}_j) / \Var(\hat{\beta}_j)$, which is a model-free estimator of $\rho_i$ under the DGP~\eqref{eq:dgp-homo}. If quality sensitivity is a stable rater trait, the slopes should correlate across halves; under homogeneous $\rho_i \equiv 1$, the expected correlation is zero.

The split-half Pearson correlation  grows sharply with activity (Table~\ref{tab:dgp-activity}), indicating more precise slope estimates. It is 0.89 for raters with 500 ratings or more. Over the entire population, it is $r = 0.408$ (95\% bootstrap CI: $[0.403, 0.412]$). After partialling out signed ideology and ideology intensity  ($\gamma_i$, $|\gamma_i|$, $|\gamma_i|^2$) for computing correlation via a linear regression fit, the split-half correlation remains 0.391 (95\% CI: $[0.386, 0.397]$), so most of the signal survives ideology controls.
\begin{table}[h]
\begin{center}
\begin{tabular}{lccc}
\toprule
Activity bin & $n$ raters & Split-half $r$ & IQR$(s_i)$ \\
\midrule
20--50 & 137,551 & 0.327 & 0.676 \\
50--100 & 76,434 & 0.499 & 0.481 \\
100--200 & 49,024 & 0.644 & 0.409 \\
200--500 & 32,387 & 0.777 & 0.367 \\
500--$\infty$ & 14,935 & 0.890 & 0.339 \\
\bottomrule
\end{tabular}
\end{center}
\caption{Split-half reliability of the model-free slope by activity bin. Reliability rises sharply with rater activity, indicating a stable latent trait rather than pure noise.}
\label{tab:dgp-activity}
\end{table}

\paragraph{Test 2: Slope Dispersion (Permutation Test)}
We compute the interquartile range $D = \mathrm{IQR}(s_i)$ of rater slopes and compare it to a permutation null that shuffles rater identities on ratings. Under homogeneous $\rho_i$, the observed dispersion should not exceed the permutation distribution.
The empirical dispersion is $D = 0.499$. The permutation null has mean 0.245 (s.d.\ 0.001), so the observed dispersion is 2.04 times the null expectation; none of the 100 permutations exceed the empirical value ($p < 0.01$).

\paragraph{Test 3: Convergent Validity: $\hat{\rho}_i$ vs.\ Model-Free Slope}

If \QSMF{} is working correctly, the estimated $\hat{\rho}_i$ should correlate strongly with the model-free slope $s_i$, despite being estimated through a different procedure (bilinear MF versus per-rater OLS).

Pearson $r(\hat{\rho}_i, s_i) = 0.717$ (95\% CI: $[0.714, 0.719]$), with Spearman $r = 0.818$. After partialling out ideology ($\gamma_i$, $|\gamma_i|$, $|\gamma_i|^2$), the partial correlation remains 0.708 (95\% CI: $[0.705, 0.711]$). The high partial correlation confirms that $\hat{\rho}_i$ captures quality sensitivity beyond what ideology explains.

\subsection{Out-of-Sample Prediction}\label{sec:oos}

We perform held-out prediction under random masking -- the standard out-of-sample check for matrix factorization models. For each of 10 seeds, we hold out a randomly chosen 20\% subset of ratings, train both models on the remaining 80\%, and evaluate prediction only on note--rater pairs that are represented in the training set. 
Averaged across the 10 seeds, baseline MSE is $0.082 \pm 0.002$, whereas \QSMF{} achieves $0.079 \pm 0.002$, where the $\pm$ values denote standard deviations across seeds. This is a $2.618\% \pm 0.091\%$ reduction in MSE, with improvements ranging from 2.504\% to 2.769\%; \QSMF{} wins in all 10 seeds.

The goal of the system is not to predict future ratings, and simpler solutions such as hyperparameter tuning, increasing the ideology dimension, or heteroskedastic matrix factorization may also improve the prediction accuracy. However, a stable out-of-sample prediction accuracy is reassuring.

\section{Sample Efficiency}\label{sec:sample-efficiency}

This section evaluates the sample-efficiency desideratum introduced in Section~\ref{sec:desiderata}: how quickly the model identifies note quality as ratings arrive. This section tests the efficiency prediction of Corollary~\ref{cor:variance-ratio}: if raters differ in quality sensitivity, heterogeneous weighting should reduce the variance of $\hat\beta_j$ relative to equal-weight aggregation. We implement a rolling-window design with two complementary estimators: chronological first-$k$ ratings and bootstrap subsamples of size $k$.

\begin{figure}[!t]
\centering
\includegraphics[width=\linewidth]{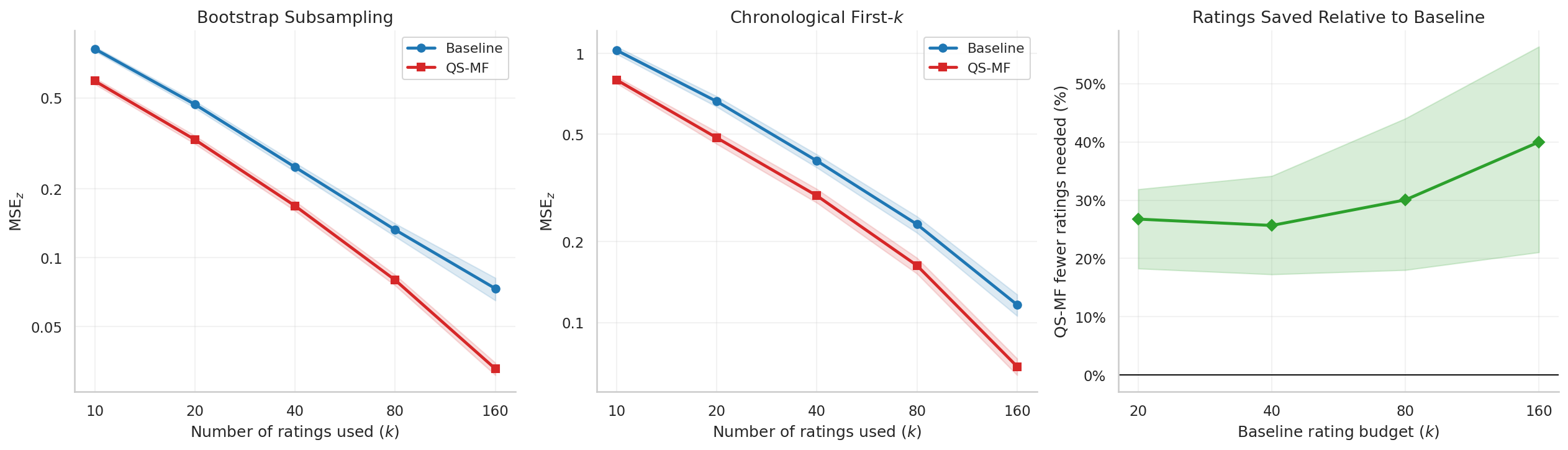}
\caption{Rolling-window sample efficiency across October 2024 for notes first observed in each two-day window and eventually receiving at least 200 ratings. Left: bootstrap subsampling. Middle: chronological first-$k$ estimation. Right: baseline-anchored ratings saved in the bootstrap design. The band shows the range across the 14 evaluation windows after averaging within each window.}
\label{fig:rolling-window}
\end{figure}

\subsection{Rolling-Window Evaluation}\label{sec:rolling-window}

 We define 16 cutoff dates at two-day intervals from October 1 to October 31, 2024. For each cutoff we retrain both models on all ratings strictly before that date. For each evaluation window:
\begin{enumerate}[leftmargin=*]
    \item Freeze rater parameters from the model trained before the window.
    \item Select notes whose first observed rating falls in that two-day window and that eventually receive at least 200 ratings.
    \item Estimate $\hat{\beta}_j^{(k)}$ from either the first $k$ chronological ratings (temporal design) or 100 bootstrap subsamples of size $k$ drawn from the eventual sequence (bootstrap design), with $k \in \{10,20,40,80,160\}$.
    \item Compare to the model (baseline compared with its future version, and QSMF compared with its own) trained on the next cutoff, i.e.\ two days ahead from the end of the window, when most of the ratings for the note have been recorded. For the purpose of this experiment, this future estimate serves as a proxy of a ground truth, denoted as $\beta_j^{\mathrm{gt}}$.
\end{enumerate}

\paragraph{Why freezing rater parameters is necessary.} Re-solving the full MF or \QSMF{} objective separately for every window, note, bootstrap sample, and timestamp would be computationally infeasible at this scale. We therefore freeze the recent rater-side parameters from the most recent pre-window model and recompute only the note-level estimates quickly under that fixed rater state. 

\paragraph{Metric.} We report $\mathrm{MSE}_z(\hat{\beta}_j^{(k)}, \beta_j^{\mathrm{gt}})$ as a function of $k$, where $\mathrm{MSE}_z$ is the mean squared error computed after z-scoring each vector. We also report \emph{ratings saved} in the bootstrap design. For a baseline rating budget $k_{\mathrm{BL}}$, let $k_{\mathrm{QS,eq}}$ be the interpolated number of \QSMF{} ratings needed to match the baseline's error at $k_{\mathrm{BL}}$. We then define
\[
\mathrm{Saved}(k_{\mathrm{BL}})=\left(1-\frac{k_{\mathrm{QS,eq}}}{k_{\mathrm{BL}}}\right)\times 100\%.
\]
This is the operational quantity of interest: if the baseline gets a given accuracy with $k$ ratings, how many fewer ratings would \QSMF{} have needed to get there?

Table entries are mean $\pm$ standard deviation across the 14 window-level aggregates. The ``\% lower'' columns report the proportional reduction in mean MSE relative to the baseline.
\begin{center}
{\small
\setlength{\tabcolsep}{4pt}
\begin{tabular}{rccccccc}
\toprule
& \multicolumn{3}{c}{Bootstrap} & \multicolumn{3}{c}{Temporal} & Ratings saved \\
$k$ & Baseline & \QSMF{} & \% lower & Baseline & \QSMF{} & \% lower & (bootstrap) \\
\midrule
10  & $0.818 \pm 0.036$ & $0.592 \pm 0.035$ & $27.6\%$ & $1.027 \pm 0.062$ & $0.797 \pm 0.043$ & $22.4\%$ & --- \\
20  & $0.468 \pm 0.031$ & $0.328 \pm 0.026$ & $29.9\%$ & $0.663 \pm 0.056$ & $0.486 \pm 0.048$ & $26.7\%$ & $26.7\%$ \\
40  & $0.249 \pm 0.022$ & $0.169 \pm 0.016$ & $32.1\%$ & $0.399 \pm 0.043$ & $0.296 \pm 0.033$ & $25.8\%$ & $25.7\%$ \\
80  & $0.132 \pm 0.016$ & $0.080 \pm 0.008$ & $39.4\%$ & $0.232 \pm 0.031$ & $0.163 \pm 0.021$ & $29.7\%$ & $30.1\%$ \\
160 & $0.073 \pm 0.016$ & $0.033 \pm 0.004$ & $54.8\%$ & $0.116 \pm 0.020$ & $0.069 \pm 0.009$ & $40.5\%$ & $40.0\%$ \\
\bottomrule
\end{tabular}
}
\end{center}

\paragraph{Main findings.} The \QSMF{} error curve lies below the baseline curve at every $k$ in both designs. In the bootstrap design, the mean MSE reduction grows from 27.6\% at $k = 10$ to 54.8\% at $k = 160$. In the chronological first-$k$ design, the corresponding reduction grows from 22.4\% to 40.5\%. A practical summary is given by ratings saved: relative to the baseline, \QSMF{} needs about 27\% fewer ratings at $k = 20$, 26\% fewer at $k = 40$, 30\% fewer at $k = 80$, and 40\% fewer at $k = 160$ to reach the same bootstrap accuracy.

\paragraph{Caveat.} This experiment compares each model against its own future estimate, using the later fit as a proxy for ground truth. As a result, it is informative about sample efficiency conditional on the model class, but it would not detect a failure mode in which a model is consistently wrong and simply converges quickly to its own biased target. We address that limitation in Section~\ref{sec:synthetic}, where ground-truth note quality is known and we can test directly whether \QSMF{} recovers it better than the baseline.

\section{Attack Robustness}\label{sec:attacks}

We evaluate attack robustness with two semi-synthetic experiments on the real data. These experiments test the structural attenuation mechanism described in Section~\ref{sec:rho-update}: partisan bloc voters, random voters, and constant raters all produce de-ideologized residuals with weak alignment to the quality axis, so the model predicts that their $\hat\rho_i$ should fall well below one. In the first experiment, raters make uncoordinated attacks. This experiment asks whether compromised raters receive depressed $\hat{\rho}$ values. Another important experiment is coordinated suppression of notes, where the quantity of interest is displacement on the attacked notes themselves. 

\subsection{Untargeted Corruption}\label{sec:individual-attacks}

We restrict attention to active raters with at least 50 historical ratings (172,780 raters in total). For each $K \in \{1{,}000;\; 5{,}000;\; 20{,}000;\; 50{,}000;\; 100{,}000\}$ and each of 10 seeds, we sample $K$ raters without replacement and corrupt their ratings: one third are assigned to be partisan raters who vote according to ideology alignment, one third are random voters, and one third are lazy raters who always vote helpful. We then train both models on the attacked data. We report the AUC of $\hat{\rho}$ for separating honest raters from attackers, together with the mean $\hat{\rho}$ of each group.

\begin{center}
\begin{tabular}{rccc}
\toprule
$K$ & AUC & Attacker $\hat{\rho}$ & Honest $\hat{\rho}$ \\
\midrule
1,000   & 0.960 & 0.475 & 1.001 \\
5,000   & 0.959 & 0.480 & 1.006 \\
20,000  & 0.962 & 0.495 & 1.026 \\
50,000  & 0.965 & 0.530 & 1.065 \\
100,000 & 0.968 & 0.615 & 1.123 \\
\bottomrule
\end{tabular}
\end{center}

 Across all five attack sizes and all 50 attacked datasets, AUC stays between 0.951 and 0.975, while the mean attacker $\hat{\rho}$ ranges from 0.460 to 0.634 and remains well below the honest-rater mean at every $K$. This result shows that the $\hat{\rho}$ channel is sorting raters in the intended direction. 

\subsection{Coordinated Attacks: $\beta$ Protection}\label{sec:targeted-attacks}

The most direct robustness question is whether \QSMF{} protects note scores from coordinated suppression. When adversaries try to move specific notes, do those attacked notes move less under \QSMF{} than under the baseline?

Motivated by prior work showing that even a small minority of bad raters can strategically suppress targeted helpful notes \cite{truong2025community}, we study a semi-synthetic targeted-suppression design on the real data. We form 100 attacking groups. In each group, one sign of the fitted ideology coordinate is chosen at random. The group samples $K_{\mathrm{pg}}$ ideologically aligned raters from that sign class, requiring at least 50 ratings and $|\hat{\gamma}_i| > 0.3$. This therefore has raters with a well-established $\rho$ from actual ratings. It then targets $K_{\mathrm{pg}}$ high-$\hat{\beta}$ notes from the opposite-sign class, restricted to notes in the top quartile of that class's note-quality distribution and with at least 200 existing ratings. All attacker--target pairs are set to rating 0, and any missing attacker--target ratings are injected explicitly. Given the sparsity of the rating matrix, over 99\% ratings are injections rather than inversions. We sweep $K_{\mathrm{pg}} \in \{10,20,40,80\}$ across 10 seeds. 

For the set of attacked target notes $T$, we compute displacement after z-scoring as
\[
\mathrm{Disp}_z = \frac{1}{|T|}\sum_{j \in T}\bigl[z(\hat{\beta}^{\mathrm{attacked}}_j) - z(\hat{\beta}^{\mathrm{clean}}_j)\bigr]^2,
\]
using each model's own clean estimate as the reference. We define protection as
\[
\mathrm{Protection} = \left(1 - \frac{\mathrm{Disp}_{z,\mathrm{QS}}}{\mathrm{Disp}_{z,\mathrm{BL}}}\right)\times 100\%.
\]

\begin{center}
\begin{tabular}{rrrrrrr}
\toprule
$K_{\mathrm{pg}}$ & \# atk & \# tgt & \# inj & BL disp$_z$ & QS disp$_z$ & Protection \\
\midrule
10 & 996  & 965  & 9,989   & 0.089 & 0.082 & $+8.2\%$  \\
20 & 1,984 & 1,854 & 39,958  & 0.146 & 0.104 & $+28.9\%$ \\
40 & 3,938 & 3,429 & 159,831 & 0.235 & 0.144 & $+38.8\%$ \\
80 & 7,756 & 5,908 & 639,325 & 0.329 & 0.226 & $+31.4\%$ \\
\bottomrule
\end{tabular}
\end{center}

\begin{figure}[h]
\centering
\includegraphics[width=0.8\linewidth]{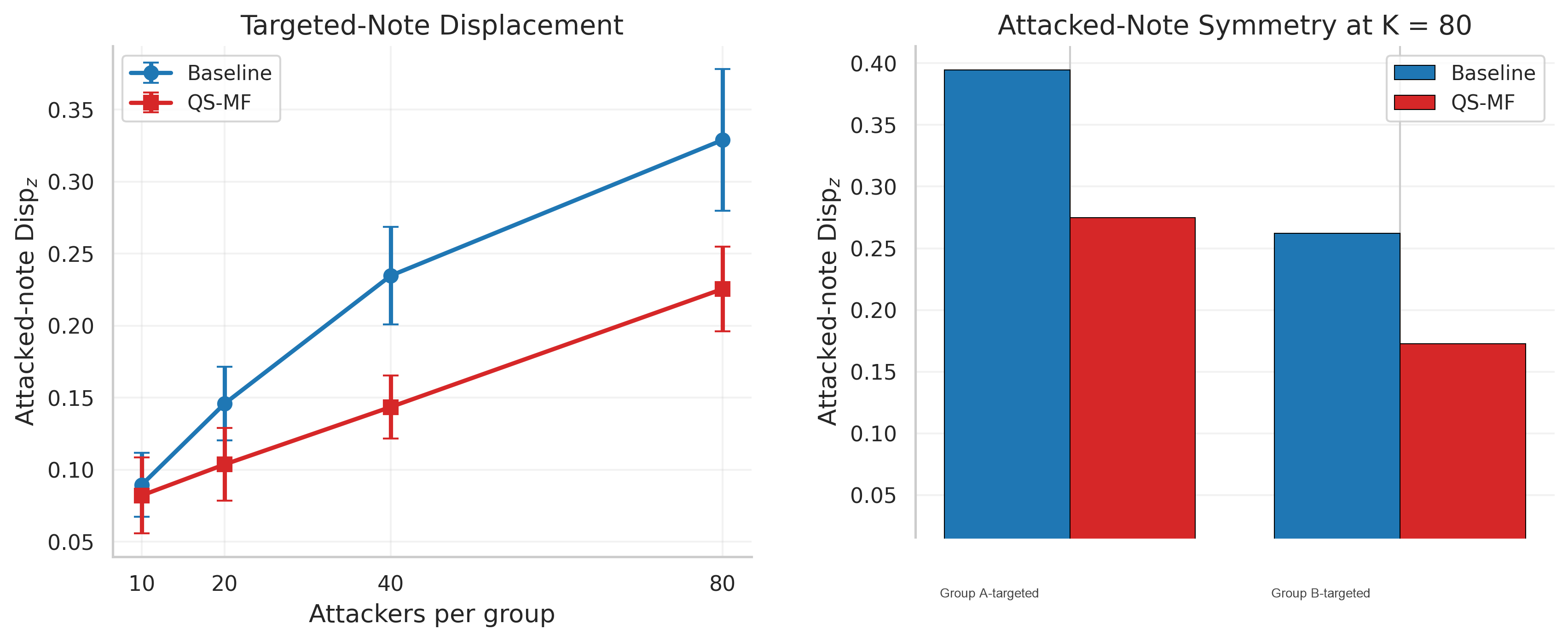}
\caption{Coordinated-attack robustness on attacked notes. Left: attacked-note displacement under the baseline and \QSMF{} as attack size grows; points are mean over seeds and error bars are $\pm 1.96\times$ standard error across 10 seeds. Right: group means for the two sign-defined target sets at the largest attack size.}
\label{fig:attack-robustness}
\end{figure}

Across all attack sizes, \QSMF{} yields smaller attacked-note displacement than
the baseline, with protection ranging from 8.2\% to 38.8\%. The gain is
largest at intermediate attack sizes, but remains positive even in the most
severe setting we consider. Figure~\ref{fig:attack-robustness} also suggests
that this improvement is not confined to one side of the ideological split: at
the largest attack size, both sign-defined target sets move less under \QSMF{}
than under the baseline. The attenuation mechanism therefore does not
eliminate coordinated suppression, but it does materially reduce how far
attacked note scores are displaced. An interesting direction for future work is to study
time-decayed reputations, so that recent behavior receives greater weight when
updating \(\rho_i\).

\section{Synthetic Data: Ground-Truth $\beta$ Recovery}\label{sec:synthetic}

\begin{figure}[t]
\centering
\includegraphics[width=\linewidth]{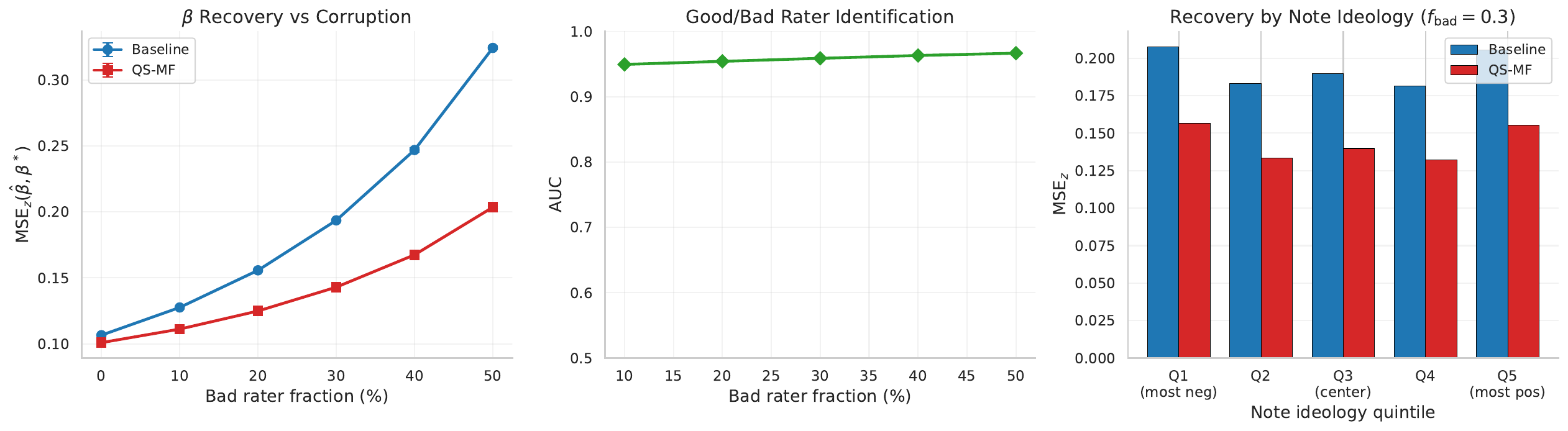}
\caption{Synthetic recovery with known ground truth. Left: $z$-scored note-quality MSE as the bad-rater fraction increases. Middle: AUC for separating good from bad raters using $\hat{\rho}$. Right: recovery error by note-ideology quintile at $f_{\mathrm{bad}}=0.3$.}
\label{fig:synthetic-recovery}
\end{figure}

This section covers three of the desiderata in
Section~\ref{sec:desiderata}: robustness to adversarial manipulation, rater
identification, and discriminative power. Real data has no known ground-truth
$\beta_j$. We therefore generate synthetic data with known parameters to
measure absolute recovery accuracy. Once the data-generating process is fixed,
we can ask directly whether \QSMF{} recovers note quality more faithfully than
the baseline, whether it identifies bad raters through \(\hat\rho_i\), and how
these quantities change as the corrupted-rater fraction increases. This
experiment provides the cleanest test of Proposition~\ref{prop:efficiency}:
when ground truth is known and bad raters genuinely have $\rho_i^* = 0$,
heterogeneous weighting should yield strictly better $\beta$ recovery, with the
advantage growing as the fraction of bad raters increases.

We generate synthetic ratings on the real CN sparsity pattern (44,985,977 observations, 412,381 raters, 365,431 notes), calibrating location and scale from the real baseline MF fit. Specifically, we set $\mu = 0.585$, match the standard deviations of $\alpha_i$, $\beta_j$, $\gamma_i$, and $\delta_j$ to the real estimates, and draw each of these latent variables from mean-zero uniform distributions. Ratings are binary:
\(
r_{ij} = \mathbf{1}[\mu + \alpha_i + \rho_i \beta_j + \gamma_i \delta_j + \epsilon_{ij} > 0.5].
\)
Per-rater noise is drawn independently from $\sigma_i \sim \mathrm{Uniform}(0.1, 0.4)$.

\paragraph{Rater types.} A fraction $f_{\mathrm{bad}}$ of raters are designated ``bad.'' Good raters have $\rho_i = 1$. Bad raters have $\rho_i = 0$ and are split into four behavioral types: one third partisan raters who respond only to ideology, one third random raters, one sixth always-helpful raters, and one sixth always-not-helpful raters.

\paragraph{Metric.} We report the $z$-scored MSE, $\mathrm{MSE}_z(\hat{\beta}, \beta^*)$, where both the estimate and the ground truth are standardized to zero mean and unit variance before comparison.

For each $f_{\mathrm{bad}} \in \{0, 0.1, 0.2, 0.3, 0.4, 0.5\}$, we run 10 seeds.
Table entries are mean $\pm$ standard deviation across the 10 seeds.
\begin{center}
\begin{tabular}{lcccc}
\toprule
$f_{\mathrm{bad}}$ & Baseline $\mathrm{MSE}_z$ & \QSMF{} $\mathrm{MSE}_z$ & $\Delta \mathrm{MSE}_z$ & AUC$(\hat{\rho})$ \\
\midrule
0.0 & $0.107 \pm 0.001$ & $0.101 \pm 0.001$ & $0.006 \pm 0.000$ & --- \\
0.1 & $0.128 \pm 0.001$ & $0.111 \pm 0.001$ & $0.016 \pm 0.000$ & $0.949 \pm 0.001$ \\
0.2 & $0.156 \pm 0.001$ & $0.125 \pm 0.001$ & $0.031 \pm 0.000$ & $0.954 \pm 0.001$ \\
0.3 & $0.194 \pm 0.001$ & $0.143 \pm 0.001$ & $0.051 \pm 0.000$ & $0.959 \pm 0.000$ \\
0.4 & $0.247 \pm 0.002$ & $0.168 \pm 0.001$ & $0.080 \pm 0.001$ & $0.963 \pm 0.000$ \\
0.5 & $0.324 \pm 0.002$ & $0.204 \pm 0.001$ & $0.121 \pm 0.002$ & $0.967 \pm 0.000$ \\
\bottomrule
\end{tabular}
\end{center}

\QSMF{} improves $\beta$ recovery at every
bad-rater fraction and in all 10 seeds at each fraction. The gain is already
tightly positive at $f_{\mathrm{bad}} = 0$, then widens in favor of \QSMF{} as corruption
increases. Rater
identification is also strong and stable across seeds: once bad raters are
present, AUC for separating good from bad raters using $\hat{\rho}_i$ ranges
from $0.948$--$0.950$ at $f_{\mathrm{bad}} = 0.1$ up to $0.966$--$0.967$ at
$f_{\mathrm{bad}} = 0.5$. In an illustrative stratification at
$f_{\mathrm{bad}} = 0.3$, \QSMF{} improves recovery in every note-ideology
quintile, so the gain is not confined to ideologically centrist notes. Taken
together, these synthetic results complement the real-data benchmarks above by
showing that the advantages of \QSMF{} are not merely self-referential: when
ground truth is known, it recovers note quality more faithfully while also
identifying corrupted raters more effectively. We turn next to the broader
limitations and implications of these results.

\section{Discussion and Limitations}\label{sec:limitations}

\subsection{Reputation as Incentive and Attack Surface}

The parameter $\rho$ may be interpreted as a reputation, and a well-designed reputation system may 
generate positive externalities for the platform. By making quality-tracking
behavior legible and consequential, it can gamify information aggregation in a
constructive way: contributors may invest more effort, rate more carefully, and
learn that consistently informative ratings increase their influence. Over
time, this may also encourage self-selection, with contributors concentrating
on notes and topics where they are knowledgeable rather than rating
indiscriminately. These behavioral responses are not directly measured in our
experiments, but if they occur they would improve platform information quality
beyond the direct effect captured by the model.

At the same time, any rater-level reputation parameter creates an incentive to
manipulate it. An adversary who understands the $\hat\rho_i$ update rule could,
in principle, rate honestly on many notes to inflate their reputation and then
exploit it on a small number of targets.
More fundamentally, coordinated groups can shift $\hat\beta_j$ itself, changing the benchmark against which all reputations are measured.
We do not claim that $\rho_i$ eliminates all strategic incentives.
However, the most prevalent real-world attack vectors (partisan bloc
voting, random flooding, and constant-rating spam) are precisely the
behaviours that cause $\hat\rho_i$ to collapse
(\S\ref{sec:individual-attacks}).
Without quality sensitivity, these attackers contribute to the consensus
at full weight.
With it, their influence is attenuated automatically and without requiring
any external labels or manual moderation.

\subsection{Additional Parameters and the Bias--Variance Tradeoff}
\QSMF{} adds one scalar parameter $\rho_i$ per rater, increasing model
complexity.
In standard bias--variance terms, this raises variance.
Three features of the design mitigate this cost.
First, the regularization shrinks estimates toward one, preventing the extra parameters from fitting noise.
Second, the two-timescale optimization
(\S\ref{sec:training}) stabilises the joint estimation by
updating reputations on a slower schedule than note-level parameters,
avoiding the feedback loops that would amplify variance in a na\"ive
alternating scheme.
Third, and most importantly, the empirical evidence consistently shows
that the bias reduction dominates:
out-of-sample prediction improves (\S\ref{sec:oos}),
sample efficiency increases (\S\ref{sec:sample-efficiency}),
and synthetic-data recovery is sharper (\S\ref{sec:synthetic}),
all of which would deteriorate if the extra parameters were merely adding
variance.

The regularization parameter $\lambda_\rho$ is also a direct tuning knob on
this tradeoff: decreasing $\lambda_\rho$ permits a broader $\hat\rho_i$
distribution and therefore more aggressive attenuation of noisy or
strategic raters, whereas increasing $\lambda_\rho$ keeps the model closer
to the baseline equal-weighting scheme. Our adopted $\lambda_\rho=0.02$ is
intentionally conservative in this sense. 
The prior center is itself a
design choice. In this paper, centering the penalty at $\rho_i=1$ means
that new or low-activity raters begin with baseline-equivalent influence,
and are downweighted only when the data indicate weak quality tracking. An
alternative would be to center the penalty at $\rho_i=0$, which has a
different interpretation: raters must earn influence on the quality channel
through demonstrated performance. Relative to the $\rho_i=1$ prior, that
choice would be expected to trade some sample efficiency, especially early
in a note's life, for greater default resistance to manipulation and noisy
raters. Both the prior center and the regularization strength
can therefore be calibrated to the application's tolerance for delay,
noise, and manipulation.

\subsection{Ethics of Differential Weighting}
Assigning unequal influence to raters raises a legitimate normative
concern: who decides which voices count more?
We frame this as a \emph{crowdsourced information aggregation} problem
rather than a voting problem.
In information aggregation, weighting by demonstrated accuracy is standard
practice and well-studied \cite{dawid1979, whitehill2009, raykar2010, karger2011, miller2005, shnayder2016}.
\QSMF{} follows the same principle.
Quality sensitivity does not penalise ideological raters per se; it
penalises raters whose ratings carry no quality signal beyond what
ideology already explains.
This distinction is reflected empirically in the small negative correlation
between $\hat\rho_i$ and $|\hat\gamma_i|$
(\S\ref{sec:learned-params}).

A related concern is whether $\rho_i$ captures genuine quality
sensitivity or merely \emph{conformity} with the majority.
The distinction rests on the two-channel structure of the model.
The ideology channel $\gamma_i \delta_j$ already captures subjective
ideological agreement; what remains in the quality channel is the
component of each rating that tracks helpfulness \emph{across} the
ideological spectrum.
A rater who disagrees with the emerging quality consensus because of
ideology will have that disagreement absorbed by $\gamma_i \delta_j$,
not penalised through $\rho_i$.
A rater who disagrees for non-ideological reasons (for instance, a
domain expert who correctly identifies a flaw that most raters miss) will
have high $\rho_i$ once subsequent ratings vindicate their assessment,
assuming $\hat\beta_j$ itself eventually converges toward the expert's position as the true information spreads in the social network.

\subsection{Future Work}

A natural next step is to formalise the attack-resistance properties of
\QSMF{}, in particular for coordinated strategies. 
Another promising extension is to make reputation dynamic through
time-decayed updates to $\rho_i$, so that older ratings gradually receive
less weight when estimating current influence. Such a design could create a
clearer path to redemption for raters whose past behavior was low-quality but
later improves, while also strengthening incentives for continuous
constructive participation rather than one-time reputation building. It would
also be valuable to test \QSMF{} on data from additional platforms that have
adopted Community Notes-style programmes (YouTube, Meta, TikTok), where the
rater population and content mix may differ substantially. Finally, the
regularisation strength $\lambda_\rho$ and prior centre are currently fixed;
an adaptive scheme that tunes these based on the observed rater pool could
improve the bias--variance tradeoff across different deployment regimes.

\bibliographystyle{plain}
\bibliography{refs}

\appendix

\section{Proofs}\label{sec:proofs}

\subsection{Exact Conditional Note-Block Update}\label{app:conditional-note-update}

Fix a note \(j\), and hold all parameters except \((\beta_j,\delta_j)\) fixed.
Define the partial residual
\(
\hat y_{ij} := r_{ij} - \hat\mu - \hat\alpha_i,
\quad i \in \mathcal I_j.
\)
Then the note-\(j\) subproblem is
\begin{equation}
\label{eq:note-subproblem}
\min_{\beta,\delta}\;
\frac12 \sum_{i\in\mathcal I_j}
\bigl(\hat y_{ij} - \hat\rho_i \beta - \hat\gamma_i \delta \bigr)^2
+ \frac{\lambda_\beta}{2}\beta^2
+ \frac{\lambda_\delta}{2}\delta^2.
\end{equation}
Because \(\lambda_\beta,\lambda_\delta>0\), this objective is strictly convex
in \((\beta,\delta)\) and therefore has a unique minimizer. Let
\(
\hat S_j := \sum_{i\in\mathcal I_j}\hat\rho_i^2,
\qquad
\hat V_j := \sum_{i\in\mathcal I_j}\hat\gamma_i^2,
\qquad
\hat C_j := \sum_{i\in\mathcal I_j}\hat\rho_i\hat\gamma_i .
\)

The baseline model has the same exact conditional note-block update, specialized
to the case \(\rho_i \equiv 1\) for all raters \(i\). In that case,
\(\hat S_j = n_j\) and
\(
\hat C_j = \sum_{i\in\mathcal I_j}\hat\gamma_i.
\)

\begin{proposition}[Conditional note-block update]
\label{prop:conditional-note-update}
For fixed \((\hat\mu,\hat\alpha,\hat\rho,\hat\gamma)\), the unique minimizer of
\eqref{eq:note-subproblem} is
\begin{equation}
\label{eq:conditional-note-update}
\begin{bmatrix}
\hat\beta_j\\[3pt]
\hat\delta_j
\end{bmatrix}
=
\begin{bmatrix}
\hat S_j+\lambda_\beta & \hat C_j\\
\hat C_j & \hat V_j+\lambda_\delta
\end{bmatrix}^{-1}
\begin{bmatrix}
\sum_{i\in\mathcal I_j}\hat\rho_i \hat y_{ij}\\[3pt]
\sum_{i\in\mathcal I_j}\hat\gamma_i \hat y_{ij}
\end{bmatrix}.
\end{equation}
Equivalently,
\begin{equation}
\label{eq:beta-update-general}
\hat\beta_j
=
\frac{(\hat V_j+\lambda_\delta)\sum_{i\in\mathcal I_j}\hat\rho_i \hat y_{ij}
-
\hat C_j\sum_{i\in\mathcal I_j}\hat\gamma_i \hat y_{ij}}
{(\hat S_j+\lambda_\beta)(\hat V_j+\lambda_\delta)-\hat C_j^2}.
\end{equation}
\end{proposition}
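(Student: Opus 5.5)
The plan is to treat the note-$j$ subproblem \eqref{eq:note-subproblem} as a two-variable ridge regression. We first show it is a strictly convex quadratic, and then read off its minimizer from the first-order conditions. Write the objective as $F(\beta,\delta)$. Expanding the square, $F$ is a quadratic form in $(\beta,\delta)$ with Hessian
\[
H_j=\begin{bmatrix}\hat S_j+\lambda_\beta & \hat C_j\\ \hat C_j & \hat V_j+\lambda_\delta\end{bmatrix}.
\]

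The first step is to verify that $H_j$ is positive definite. Write $H_j = G_j + \mathrm{diag}(\lambda_\beta,\lambda_\delta)$, where $G_j=\sum_{i\in\mathcal I_j} (\hat\rho_i,\hat\gamma_i)^\top(\hat\rho_i,\hat\gamma_i)$ is a Gram matrix and hence positive semidefinite. Adding a diagonal matrix with strictly positive entries makes the sum positive definite. Consequently $F$ is strictly convex and coercive, so a unique minimizer exists, and that minimizer is characterized by $\nabla F=0$.

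The second step is to compute the gradient, which takes one line:
\[
\partial_\beta F = -\sum_i \hat\rho_i(\hat y_{ij}-\hat\rho_i\beta-\hat\gamma_i\delta)+\lambda_\beta\beta,
\]
and analogously for $\partial_\delta F$. Setting both partial derivatives to zero gives the normal equations
\[
H_j\,(\beta,\delta)^\top=\Bigl(\sum_i\hat\rho_i\hat y_{ij},\ \sum_i\hat\gamma_i\hat y_{ij}\Bigr)^\top.
\]
Since $H_j$ is invertible by the first step, this is exactly \eqref{eq:conditional-note-update}.

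The third step derives the scalar formula \eqref{eq:beta-update-general} from the explicit $2\times2$ inverse,
\[
H_j^{-1}=\frac{1}{\det H_j}\begin{bmatrix}\hat V_j+\lambda_\delta & -\hat C_j\\ -\hat C_j & \hat S_j+\lambda_\beta\end{bmatrix}.
\]
Its determinant $\det H_j=(\hat S_j+\lambda_\beta)(\hat V_j+\lambda_\delta)-\hat C_j^2$ is strictly positive by the first step. Alternatively, positivity follows directly from Cauchy--Schwarz: $\hat C_j^2\le \hat S_j\hat V_j$, which is strictly less than $(\hat S_j+\lambda_\beta)(\hat V_j+\lambda_\delta)$. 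Taking the first row of $H_j^{-1}$ times the right-hand side gives the stated expression for $\hat\beta_j$.

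None of these steps is a real obstacle. The only point needing care is the strict positivity of the determinant, since it is used both for uniqueness and for dividing in \eqref{eq:beta-update-general}. I would make it explicit via the Cauchy--Schwarz bound rather than only invoking positive definiteness. I would also note that the baseline case follows by substituting $\hat\rho_i\equiv1$, which gives $\hat S_j=n_j$ and $\hat C_j=\sum_i\hat\gamma_i$.
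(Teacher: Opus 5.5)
Your proposal is correct and follows essentially the same route as the paper: write the first-order conditions as the $2\times2$ normal equations, show $\det H_j \ge \hat S_j\lambda_\delta+\hat V_j\lambda_\beta+\lambda_\beta\lambda_\delta>0$ via Cauchy--Schwarz, and read $\hat\beta_j$ off the first row of $H_j^{-1}$. Your Gram-matrix-plus-positive-diagonal argument for positive definiteness is slightly more complete than the paper's, which infers positive definiteness from the determinant bound alone and leaves implicit that the diagonal entries are positive.
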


\begin{proof}
The objective in \eqref{eq:note-subproblem} is a quadratic in \((\beta,\delta)\).
Its first-order conditions are
\[
\sum_{i\in\mathcal I_j}\hat\rho_i(\hat y_{ij}-\hat\rho_i\beta-\hat\gamma_i\delta)-\lambda_\beta \beta = 0,
\]
\[
\sum_{i\in\mathcal I_j}\hat\gamma_i(\hat y_{ij}-\hat\rho_i\beta-\hat\gamma_i\delta)-\lambda_\delta \delta = 0.
\]
Rearranging gives the linear system
\[
\begin{bmatrix}
\hat S_j+\lambda_\beta & \hat C_j\\
\hat C_j & \hat V_j+\lambda_\delta
\end{bmatrix}
\begin{bmatrix}
\beta\\
\delta
\end{bmatrix}
=
\begin{bmatrix}
\sum_{i\in\mathcal I_j}\hat\rho_i \hat y_{ij}\\[3pt]
\sum_{i\in\mathcal I_j}\hat\gamma_i \hat y_{ij}
\end{bmatrix}.
\]
By Cauchy--Schwarz, \(\hat C_j^2 \le \hat S_j \hat V_j\), so
\[
(\hat S_j+\lambda_\beta)(\hat V_j+\lambda_\delta)-\hat C_j^2
\ge
\hat S_j\lambda_\delta+\hat V_j\lambda_\beta+\lambda_\beta\lambda_\delta
>0.
\]
Hence the coefficient matrix is positive definite and therefore invertible,
yielding \eqref{eq:conditional-note-update}. The expression
\eqref{eq:beta-update-general} is the first coordinate of this inverse.
\end{proof}

\subsection{Proof of Proposition~\ref{prop:efficiency}}\label{app:proof-efficiency}

Using \eqref{eq:oracle-reduced-model},
\[
\mathbb E[d^*_{ij}]=\rho_i^*\beta_j^*,
\qquad
\mathrm{Var}(d^*_{ij})=\sigma^2.
\]
Therefore
\[
\mathbb E[\hat\beta_j^{QS}]
=
\frac{\sum_i \rho_i^*\, \mathbb E[d^*_{ij}]}
{\sum_i (\rho_i^*)^2}
=
\frac{\sum_i (\rho_i^*)^2 \beta_j^*}
{\sum_i (\rho_i^*)^2}
=
\beta_j^*,
\]
so \(\hat\beta_j^{QS}\) is unbiased. By the cross-rater independence assumption
in Section~\ref{sec:efficiency},
\[
\mathrm{Var}(\hat\beta_j^{QS})
=
\frac{\sum_i (\rho_i^*)^2 \sigma^2}
{\left(\sum_i (\rho_i^*)^2\right)^2}
=
\frac{\sigma^2}{\sum_i (\rho_i^*)^2}.
\]

Now consider an equal-weight estimator
\[
\tilde\beta_j=c_j\sum_i d^*_{ij}.
\]
Its expectation is
\[
\mathbb E[\tilde\beta_j]
=
c_j \sum_i \rho_i^*\beta_j^*
=
c_j n_j \bar\rho_j^* \beta_j^*.
\]
Thus unbiasedness requires
\[
c_j=\frac{1}{n_j\bar\rho_j^*},
\]
which is unique when \(\bar\rho_j^*\neq 0\). Its variance is then
\[
\mathrm{Var}(\tilde\beta_j^{\mathrm{uni}})
=
\frac{1}{n_j^2(\bar\rho_j^*)^2}
\sum_i \mathrm{Var}(d^*_{ij})
=
\frac{1}{n_j^2(\bar\rho_j^*)^2}\cdot n_j \sigma^2
=
\frac{\sigma^2}{n_j(\bar\rho_j^*)^2}.
\]
\qed

\subsection{Derivation of the \texorpdfstring{$\rho_i$}{rho_i} Update}\label{app:rho-update-derivation}

The objective in \eqref{eq:rho-subproblem} is a strictly convex quadratic in
\(\rho\), so it has a unique minimizer on \([0,\infty)\). Expanding the objective,
\[
\frac12\sum_{j\in\mathcal N_i}(\hat d_{ij}-\rho\hat\beta_j)^2
+\frac{\lambda_\rho}{2}(\rho-1)^2
=
\frac12 D_i + \frac{\lambda_\rho}{2} -(A_i+\lambda_\rho)\rho+\frac12(B_i+\lambda_\rho)\rho^2.
\]
The unconstrained minimizer is therefore \((A_i+\lambda_\rho)/(B_i+\lambda_\rho)\). Projecting
onto the feasible set \([0,\infty)\) gives
\[
\hat\rho_i=\max\!\left\{0,\frac{A_i+\lambda_\rho}{B_i+\lambda_\rho}\right\}
=\frac{(A_i+\lambda_\rho)_+}{B_i+\lambda_\rho}.
\]
\qed

\subsection{Score Component Derivation}\label{app:score-test}

This appendix provides the details behind the score-based diagnostic in
\S\ref{sec:score-test}.

Consider the unrestricted model
\(
r_{ij}
=
\mu + \alpha_i + \rho_i\beta_j + \gamma_i\delta_j + \epsilon_{ij},
\quad
\epsilon_{ij}\sim\mathcal N(0,\sigma^2).
\)
The baseline Community Notes model corresponds to the restriction
\(\rho_i=1\) for all~\(i\). Under Gaussian errors, the log-likelihood
contribution from the ratings of rater~\(i\) is (up to an additive constant)
\[
\ell_i(\rho_i)
=
-\frac{1}{2\sigma^2}
\sum_{j\in\mathcal N_i}
(r_{ij}-\mu-\alpha_i-\rho_i\beta_j-\gamma_i\delta_j)^2.
\]
The partial derivative with respect to \(\rho_i\) is
\[
\frac{\partial\ell_i}{\partial\rho_i}
=
\frac{1}{\sigma^2}
\sum_{j\in\mathcal N_i}
\beta_j\,(r_{ij}-\mu-\alpha_i-\rho_i\beta_j-\gamma_i\delta_j).
\]
Evaluating at the restricted estimate \(\rho_i=1\) and the fitted baseline
parameters, the residual becomes
\(
e_{ij}^{\text{base}}
= r_{ij}
  -\hat\mu^{\text{base}}
  -\hat\alpha_i^{\text{base}}
  -\hat\beta_j^{\text{base}}
  -\hat\gamma_i^{\text{base}}\hat\delta_j^{\text{base}},
\)
so the score component reduces to
\(
U_i
=
\sum_{j\in\mathcal N_i}
\hat\beta_j^{\text{base}}\, e_{ij}^{\text{base}},
\)
which is the Rao score component for \(\rho_i\) (up to the factor
\(1/\sigma^2\)). Under the null, \(\E[U_i]=0\).

\paragraph{Slope interpretation.}
As shown in Section~\ref{sec:score-test}, the identity
\(d_{ij}^{\text{base}} = e_{ij}^{\text{base}} + \hat\beta_j^{\text{base}}\)
implies
\(
s_i - 1
=
U_i\big/\!\sum_{j\in\mathcal N_i}(\hat\beta_j^{\text{base}})^2,
\)
so the slope statistic \(s_i\) is an affine transformation of the score
component for~\(\rho_i\).


\section{Replication with Two Ideology Dimensions}\label{app:k2}

\begin{figure}[h]
\centering
\includegraphics[width=0.8\linewidth]{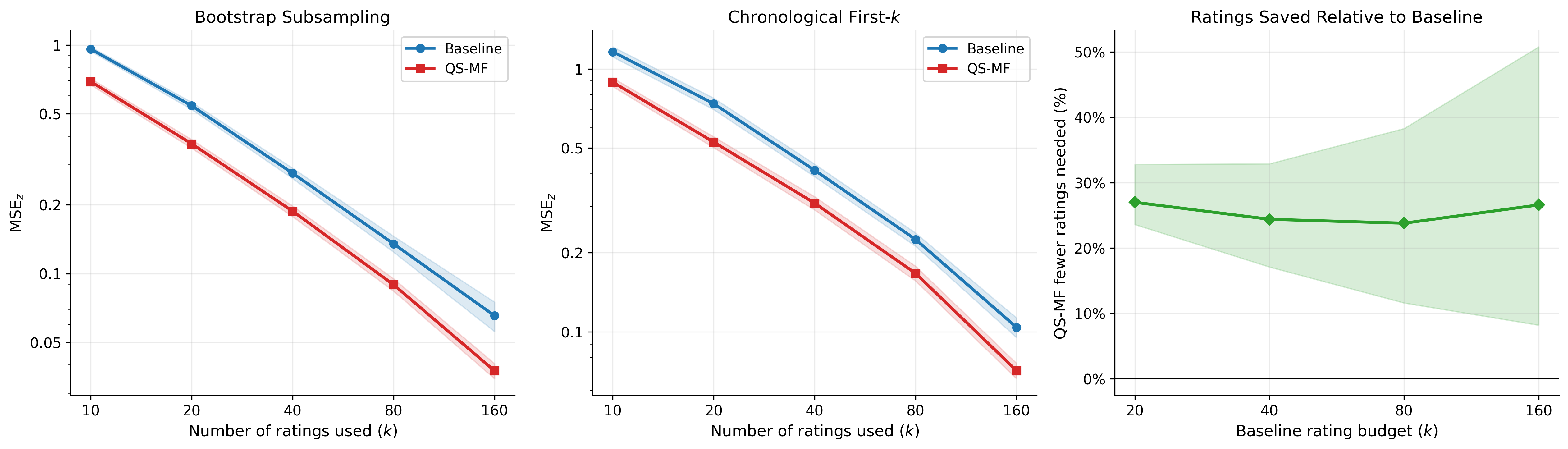}
\caption{Rolling-window sample efficiency with two ideology dimensions. The design matches Figure~\ref{fig:rolling-window}; only the ideology term is expanded to two latent dimensions. The band shows the range across the 14 evaluation windows after averaging ratings-saved within each window.}
\label{fig:rolling-window-k2}
\end{figure}

The main experiments use a single ideology factor ($k=1$), matching the production CN system. A natural concern is that the gains from \QSMF{} are an artifact of the one-dimensional ideology assumption: perhaps a richer ideology model would absorb the variation that $\rho_i$ currently captures. We address this by repeating the out-of-sample prediction and rolling-window sample efficiency experiments with $k=2$ ideology dimensions.

\subsection{Out-of-Sample Prediction}\label{app:oos-k2}

The quality-sensitivity channel remains active within the $k=2$ specification: \QSMF{} reduces held-out MSE from $0.071$ to $0.069$, a $2.23\% \pm 0.02\%$ reduction, and wins in all 10 splits. Thus the main conclusion from Section~\ref{sec:oos} survives the higher-dimensional ideology model.

\subsection{Rolling-Window Sample Efficiency}\label{app:rolling-k2}

We next repeat the rolling-window design from Section~\ref{sec:rolling-window} with $k=2$ ideology dimensions. The cohort rule, cutoff schedule, future-consensus target, frozen-rater construction, and evaluation metrics are unchanged, so we only report the aggregate results here and refer to the main text for the full design and interpretation.

Table entries are mean $\pm$ standard deviation across the 14 window-level aggregates.
\begin{center}
\begin{tabular}{rccccc}
\toprule
& \multicolumn{2}{c}{Bootstrap} & \multicolumn{2}{c}{Temporal} & Ratings saved \\
$k$ & Baseline & \QSMF{} & Baseline & \QSMF{} & (bootstrap) \\
\midrule
10  & $0.961 \pm 0.041$ & $0.692 \pm 0.042$ & $1.161 \pm 0.098$ & $0.890 \pm 0.065$ & --- \\
20  & $0.543 \pm 0.038$ & $0.370 \pm 0.031$ & $0.737 \pm 0.071$ & $0.527 \pm 0.048$ & $27.0\%$ \\
40  & $0.275 \pm 0.028$ & $0.188 \pm 0.019$ & $0.412 \pm 0.043$ & $0.309 \pm 0.034$ & $24.4\%$ \\
80  & $0.135 \pm 0.021$ & $0.090 \pm 0.011$ & $0.225 \pm 0.025$ & $0.167 \pm 0.021$ & $23.8\%$ \\
160 & $0.066 \pm 0.019$ & $0.038 \pm 0.006$ & $0.104 \pm 0.017$ & $0.071 \pm 0.009$ & $26.6\%$ \\
\bottomrule
\end{tabular}
\end{center}

The qualitative pattern is unchanged. In the bootstrap design, relative improvement ranges from $28.0\%$ at $k=10$ to $42.5\%$ at $k=160$; in the chronological design, it ranges from $23.4\%$ to $31.6\%$. The baseline-anchored ratings-saved measure also remains positive, averaging $27.0\%$, $24.4\%$, $23.8\%$, and $26.6\%$ at $k = 20, 40, 80$, and $160$, respectively. Thus the main sample-efficiency conclusion is not specific to the one-dimensional ideology model.

\end{document}